\def\orcidID#1{\unskip$^{[#1]}$} 
\tikzstyle{decision} = [diamond, draw, fill=blue!20, 
\tikzstyle{block} = [rectangle, draw, fill=blue!20,  text centered, rounded corners, minimum height=4em]
\tikzstyle{line} = [draw, -latex']
\tikzstyle{cloud} = [draw, ellipse,fill=red!20, node distance=6.6cm,
\tikzstyle{algorithm} = [rectangle, draw, fill=green!20,  text centered, rounded corners, minimum height=4em, minimum width =6em]
\tikzstyle{initialization} = [rectangle, draw,   text centered, minimum height=4em, minimum width =6em]
\newtheorem{theorem}{Theorem}
\newtheorem*{theorem*}{Theorem}
\newtheorem{lemma}{Lemma}
\newtheorem{corollary}{Corollary}
\theoremstyle{definition}
\theoremstyle{remark}
\newtheorem*{remark}{Remark}
\newtheorem*{example}{Example}
\newif\ifcomment
\renewcommand{\epsilon}{\varepsilon}
\tikzstyle{block}=[draw, rectangle, minimum height=1cm, text width=1.5cm, text centered, draw=darkgray, font=\small]
\tikzstyle{block_medium}=[draw, rectangle, minimum height=1.5cm, text width=2cm, text centered, draw=darkgray, font=\small]
\tikzstyle{block_large}=[draw, rectangle, minimum height=2cm, text width=2cm, text centered, draw=darkgray, font=\small]
\tikzstyle{line} = [draw, -latex]
\pgfplotsset{compat=1.14}
\begin{document}
	\title{Signature Codes for a Noisy Adder Multiple Access Channel
	}
	\author{%
		\IEEEauthorblockN{Gökberk~Erdoğan \orcidID{0000-0002-6282-0894}, Georg~Maringer \orcidID{0000-0002-2868-5131}, Nikita~Polyanskii \orcidID{0000-0003-3735-5705}}\\
		\thanks{
			G.~Erdoğan and G.~Maringer are with the Institute for Communications Engineering, Technical University of Munich, Germany. N.~Polyanskii is with the IOTA Foundation, Germany.
			
			Georg Maringer's work was supported by the German Research Foundation (Deutsche Forschungsgemeinschaft, DFG) under Grant No.~WA3907/4-1.
			
				Nikita Polyanskii's work was partially done
 with the Technical University of Munich and the Skolkovo Institute of Science and Technology in 2021. Nikita Polyanskii's research was supported by the German Research Foundation (Deutsche Forschungsgemeinschaft, DFG) under Grant No. WA3907/1-1.% and  by the Russian Foundation for Basic Research under Grant No.~\mbox{20-01-00559}.

			Emails: gokberk.erdogan@tum.de, georg.maringer@tum.de, nikita.polyansky@gmail.com
	}}

	\IEEEoverridecommandlockouts
	\maketitle	

	\begin{abstract}
		
		In this work, we consider $q$-ary signature codes of length $k$ and size $n$ for a noisy adder multiple access channel. A signature code in this model has the property that any subset of codewords can be uniquely reconstructed based on any vector that is obtained from the sum (over integers) of these codewords. We show that there exists an algorithm to construct a signature code of length $k = \frac{2n\log{3}}{(1-2\tau)\left(\log{n} + (q-1)\log{\frac{\pi}{2}}\right)} +\mathcal{O}\left(\frac{n}{\log{n}(q+\log{n})}\right)$ capable of correcting  $\tau k$ errors at the channel output, where $0\le \tau < \frac{q-1}{2q}$. Furthermore, we present an explicit construction of signature codewords with polynomial complexity being able to correct up to $\left( \frac{q-1}{8q} - \epsilon\right)k$ errors for a codeword length $k = \mathcal{O} \left ( \frac{n}{\log \log n} \right )$, where $\epsilon$ is a small non-negative number. Moreover, we prove several non-existence results (converse bounds) for $q$-ary signature codes enabling error correction.
		%at most by corrupting at most $\tau k$ positions with $0\le \tau \le 1$. We prove there exists an algorithm to construct a signature code of length $k = \frac{(2\log{3})n}{(1-2\tau)\log{n}} + \mathcal{O}\left(\frac{n\log{\log{n}}}{\log^2{n}}  \right)$ so that up to $ \tau k$ errors at the channel output can be tolerated with $0\le \tau < \frac{q-1}{2q}$. We also show how to construct an explicit code of length $k=\mathcal{O}(\frac{n}{\log\log n})$  in polynomial time such that up to $\left(\frac{q-1}{8q} - \epsilon\right)k$ errors are tolerated, where $\epsilon$ is a small non-negative number. Furthermore, we prove several non-existence results for $q$-ary signature codes resilient to errors.
	\end{abstract}
	
	\section{Introduction}
	
	The problem of determining an $n$-bit number by asking questions for the sum of a subsequence of its bits has been addressed by scientists from different research backgrounds for over fifty years. In coding theory, the problem corresponds to signature codes for the \emph{binary adder multiple access channel (MAC)}, whereas in combinatorics, it is commonly referred to as the \emph{coin weighing problem}.
	
	In this paper, we discuss a non-binary generalization of this problem. Consider a set of $n$ users where the $i$-th user gets assigned a \emph{signature codeword} $\vec{s}_i \in \{0,1,\ldots,q-1\}^{k \times 1}$ with $q \geq 2$. Each user is associated to a channel input $X_i, \, i \in \{1,2,\ldots,n\}$. If a user is active, they send their signature codeword ($X_i = \vec{s}_i$); otherwise, they send the all-zero word ($X_i = \{0\}^{k \times 1}$). The output of the $q$-ary adder MAC is then defined to be $Y \triangleq \sum_{i=1}^n X_i, Y \in \{0,1,\dots,n(q-1)\}^{k \times 1}$. The goal is to identify the set of active users by observing the output $Y$, assuming each user's signature codeword is given. Thereby, the natural question is how to minimize the length $k$ of a signature code given its size $n$.
    
    For $q = 2$, an equivalent scenario is the coin weighing problem with non-adaptive weighings. Within this problem setting there are a total of $n$ coins, some of which are genuine coins with weight $g$ and the rest are counterfeit coins with weight $c$, where both $g$ and $c$ are known. The goal is to determine the counterfeit coins with as few weighing operations as possible. A weighing operation determines the weight of an arbitrary subset of the initial $n$-coin set. The term \emph{non-adaptive} refers to the fact that the weighing operations cannot be adapted according to the results of previous weighings.
    
    Representing $n$ coins, we define $\vec{u} \in \{0,1\}^{1 \times n}$ as the unknown information sequence. In addition, we define $M\in\{0,1\}^{k\times n}$ as a predefined binary query matrix, where each row is a weighing operation, i.e. a query. We finally define the answer sequence $\vec{w} \in \{0,1,\dots,n\}^{k \times 1}$ and by definition $\vec{w} \triangleq M \vec{u}^{T}$.
    
    Referring back to adder MAC, the nonzero positions in $\vec{u}$ represent the active users, and the zero positions represent the inactive users. So the coin weighing problem translates to active users sending the respective column of the query matrix through the channel, and inactive users sending the all-zero codeword. Also note that $\vec{w}$ is equivalent to the channel output $Y$.
    
    In this paper, we will mainly investigate signature codes in a combinatorial error model for a noisy $q$-ary MAC. This means that if up to $\tau k$ errors are injected in an adversarial manner into the channel output $Y$ for $0 \leq \tau \leq 1$, the set of active users shall be uniquely determined.

    \subsection{Related Works}
	The coin weighing problem has been introduced by Shapiro in \cite{sha60}. Erdős and Rényi in \cite{Erd63} provided the converse bound on the number of weighings. Lindström in \cite{lin65}, and Cantor and Mills in \cite{can66} each provided explicit constructions for the coin-weighing problem which are order-optimal as the number of coins $n\to\infty$. Bshouty in \cite{bsh12} considered the coin weighing problem in the presence of noise.
    
    The problem of constructing binary \textit{uniquely decodable} signature codes for the adder MAC has been studied by Chang and Weldon in~\cite{chang1979coding}, by van der Meulen and Vangheluwe in \cite{Meu04}. Gritsenko et al. in~\cite{grit2017} discuss the construction of binary signature codes for the noisy adder MAC when the number of active users is limited by a constant number. Under the guise of detecting sets, non-binary signature codes for the noiseless setting have been studied by Lindström in \cite{lin65}, where he provided an order-optimal code construction of size $n$ for any fixed alphabet size $q\ge 2$ and $n\to\infty$.
    
    A relevant research direction is devoted to the Mastermind game. In this game, one player conceals a $q$-ary $n$-length vector and the second players tries to  guess the vector by asking queries. Each query is an $n$-length vector and the response to the query is the Hamming distance between the vectors. For $q=2$, this problem has been shown to be equivalent to the coin weighing problem (up to at most one query) \cite{andras2004metric}. Chvatal in \cite{Chv83} initiated the study of the $q$-ary non-adaptive version of the Mastermind game. The follow-up works~\cite{kab18, nik19} established the order-optimal strategies for fixed $q$ and $n\to\infty$.

	\subsection{Our Contribution}
	In this paper, we consider $q$-ary signature codes for the noisy adder MAC. By the double counting argument, we obtain a converse bound for the $q$-ary case analogously to the binary result presented in \cite{bsh12}. We present a random construction for signature codes as well as two explicit constructions. The term \emph{explicit construction} refers to the fact that we know how to construct a signature code for the given case in polynomial time. The first explicit construction performs well if the number of errors $t$ is small and has been generated from scratch. It uses signature codes for the noiseless case that have been introduced by Lindström in \cite{lin65} and combines them with Reed-Solomon codes. The second construction is based on the previously mentioned random construction and is the method of choice for large $t$, e.g. $t=\tau k$ for a constant $0 \leq \tau \leq \frac{q-1}{2q}$. More specifically, the explicit construction is obtained by an exhaustive search to find short signature codes and by combining them with good low rate codes via Kronecker products. This enables the design of signature codes that tolerate a designated fraction of errors.
	%we obtain an upper bound on the fraction of erroneous channel outputs $\tau$, i.e. there are at most $t=\tau k$ errors, for arbitrary $q$ such that there exists a set of signature codewords for $n$ users, enabling the determination of the active users despite the errors. Furthermore, using probabilistic arguments, we show the existence a set of signature codewords tolerating up to $t$ errors where the signature codeword length $k$ depends on $t$. We use this result to find short signature codes through exhaustive search and combine them with good low rate error-correcting codes via Kronecker products to construct a signature code that tolerate a designated fractional number of errors.
	
	For $q=2$, Bshouty presented several results in~\cite{bsh12} and some of his approaches inspired the methodology to obtain the probabilistic results for general $q$ presented in this paper. However, the generalization of approaches is not straightforward. In particular, the random construction required us to discover properties of the \emph{generalized Pascal's Triangle}.
	
	\subsection{Outline}
	
	The remainder of the paper is organized as follows. In Section~\ref{section:preliminaries}, we introduce the notation used throughout the paper alongside generalized Pascal's triangles. Furthermore, we define signature codes and we present a converse bound discovered by Bshouty in \cite{bsh12}. In Section~\ref{section:results}, we provide a random construction for a signature code tolerating $t<\frac{q-1}{2q} k$ errors in the channel output. Furthermore, we show the aforementioned two explicit constructions of signature codes for the $q$-ary MAC. Moreover, we  provide a converse bound for the $q$-ary case. Section~\ref{section:conclusion} concludes the paper.
	
	\section{Preliminaries}\label{section:preliminaries}
	
	\subsection{Notation}
    We denote the set of integers by $\mathbb{Z}$, the set of positive integers by $\mathbb{Z}^+$ and the set of negative integers by $\mathbb{Z}^-$. Similarly, we denote the set of nonnegative integers by $\mathbb{Z}^{+}_{0} = \mathbb{Z}^+ \cup \{0\}$. For a positive integer $b$, $\mathbb{Z}_0^{(b)}$ denotes the set $\{0,1,\dots,b-1\}$. Logarithms are to the base $2$ throughout this work and simply denoted by $\log$. $H_2(\cdot)$ denotes the binary entropy function. $wt(y)$ denotes the Hamming weight of the integer sequence $y$, where the Hamming weight is defined as the number of nonzero positions in a given sequence.  $\otimes$ denotes the Kronecker product of two matrices.
    
    \subsection{Signature Codes}
    A matrix $M\in\{0,1,\dots,q-1\}^{k\times n}$ is called a $q$-ary signature code capable of correcting $t$ errors if for any two distinct binary vectors $\vec{u}_1,\vec{u}_2\in\{0,1\}^{1\times n}$ and any integer vectors $\vec{e}_1,\vec{e}_2\in \mathbb{Z}^{k\times 1}$ with  $wt(\vec{e}_1)\le t$ and $wt(\vec{e}_2)\le t$, the vectors $M\vec{u}_1+\vec{e}_1$ and $M\vec{u}_2+\vec{e}_2$ are different in at least one position.

    \subsection{Pascal's Triangle}
    \label{subsec:pascalstriangle}
    Define $C_{k,n}$ as the coefficient of $x^k$ in the expression $(1+x)^n$, where $k,n \in \mathbb{Z}^{+}_{0}$ and $0 \leq k \leq n$. Note that $C_{k,n} = \binom{n}{k}$ are the binomial coefficients which can also be defined in a recursive way, i.e.
    $C_{k,n} = C_{k,n-1} + C_{k-1,n-1}$ or equivalently $\binom{n}{k} = \binom{n-1}{k} + \binom{n-1}{k-1}$ where the initial terms are $C_{0,0} = 1$, $C_{1,0} = 0$ and $C_{0,1} = 1$. Pascal's triangle can then be defined as a triangular arrangement of the coefficients $C_{k,n}$ over a range of values $n$, i.e. the coefficients of an expression of the form $(1+x)^n$. In other words, to build the triangle, we write a 1 in row zero, as $(1+x)^0 = 1$. Then, every coefficient in every row is determined by summing the 2 adjacent coefficients in the row above it, e.g. for $n \leq 6$:
    \begin{example} First seven rows of Pascal's triangle.
        \begin{center}
            \begin{tabular}{>{$n=$}l<{\hspace{1pt}}*{20}{c}}
            ~0 &1&&&&&&\\
                ~1 &1&1&&&&&\\
                ~2 &1&2&1&&&&\\
                ~3 &1&3&3&1&&&\\
                ~4 &1&4&6&4&1&&\\
                ~5 &1&5&10&10&5&1&\\
                ~6 &1&6&15&20&15&6&1
            \end{tabular}
        \end{center}
    \end{example}
    
    \subsection{Generalized Pascal's Triangle}
    \label{subsec:generalizedpascalstriangle}
    Now we are interested in the coefficients of the more general expression $(1+x+x^2+\dots+x^{q-1})^n$. For any $q$, we build a triangle with the same approach as in building Pascal's triangle - by determining every coefficient in every row by summing up the $q$ adjacent terms on top of it - which we are going to name \emph{$q$-ary Pascal's triangle}. This triangle and several of its identities have already been introduced in~\cite{Freund56} and \cite{bondarenko1993}.
        
    The triangle can be identified by a recurrence relation:
    \begin{equation}
        \label{eq:pascalstriangleconstruction}
        C^{(q)}_{k,n} = \sum_{j = 0}^{q-1} C^{(q)}_{k-j,n-1}
    \end{equation}
    where $C^{(q)}_{k,n}$ denotes the $k$th coefficient in the $n$th row of the $q$-ary Pascal's triangle and the leftmost coefficient in every row has index $k = 0$. Notice that in the expression $(1+x+x^2+\dots+x^{q-1})^n$, the powers of $x$ range from 0 to $n(q-1)$, so $0 \leq k \leq n(q-1)$, where $n \in \mathbb{Z}^+_0$. Then, for any $q$, the initial coefficients are $C^{(q)}_{0,0} = 1$ and $C^{(q)}_{k,1} = 1$ for any $k \in \{0,1,\dots,(q-1)\}$. Also note that for $q = 2$, $C^{(2)}$ depicts the triangular arrangement of binomial coefficients, i.e. $C^{(2)}_{k,n} = C_{k,n} = \binom{n}{k}$.
    
    \begin{example}
    First six rows of ternary Pascal's triangle, i.e. $q = 3$.
    \begin{center}
        \begin{tabular}{>{$n=$}l<{\hspace{1pt}}*{20}{c}}
            ~0 &&&&&1&&&&\\
            ~1 &&&&1&1&1&&&\\
            ~2 &&&1&2&3&2&1&&\\
            ~3 &&1&3&6&7&6&3&1&\\
            ~4 &1&4&10&16&19&16&10&4&1\\
            ~5 ~1&5&15&30&45&51&45&30&15&5&1
        \end{tabular}
    \end{center}
    \end{example}
    
    Note that the coefficient in row $n$ with index $k$ is indeed the coefficient of $x^k$ in the expansion of $(1+x+x^2)^n$, where every row's first coefficient is referred to with index $k=0$, every row's second coefficient is referred to with index $k=1$ and so on. Notice that the last coefficient in row $n$ has the index $k = n(q-1)$. So the $n$th row has $n(q-1)+1$ coefficients. This means that unless $n$ is odd \emph{and} $q$ is even, the $n$th row in a $q$-ary Pascal's Triangle has an odd number of coefficients. And for such pairs of $n$ and $q$, the coefficient at the center has the index $k = n(q-1)/2$ and will be called the \emph{central coefficient}.
    
    We now present a few properties of generalized Pascal's triangles.

    \begin{lemma}
    \label{lemma:trianglesumsquarelawshifted}
    For any $j \leq n$, $j \in \mathbb{Z}^+_{0}$:
    \begin{equation*}
        C^{(q)}_{(n-j)(q-1),2n} = \sum_{k = 0}^{(n-j)(q-1)}C^{(q)}_{k,n+j} \cdot C^{(q)}_{k,n-j}.
    \end{equation*}
    \end{lemma}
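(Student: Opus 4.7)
The plan is to prove the identity by comparing coefficients in the polynomial identity
$$(1+x+\cdots+x^{q-1})^{2n} = (1+x+\cdots+x^{q-1})^{n+j}\cdot(1+x+\cdots+x^{q-1})^{n-j},$$
and then exploit the palindromic symmetry of these polynomials to transform the natural convolution sum into the symmetric sum that appears in the statement.

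Concretely, I would first recall that by definition $C^{(q)}_{k,m}$ is the coefficient of $x^k$ in $(1+x+\cdots+x^{q-1})^m$, so the coefficient of $x^{(n-j)(q-1)}$ on the left-hand side of the polynomial identity is exactly $C^{(q)}_{(n-j)(q-1),\,2n}$. On the right-hand side, the standard convolution formula gives this coefficient as
$$\sum_{k=0}^{(n-j)(q-1)} C^{(q)}_{k,\,n+j}\cdot C^{(q)}_{(n-j)(q-1)-k,\,n-j},$$
where the summation range is already forced by requiring both indices to lie within the support of the respective polynomials.

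Next I would establish the palindromic symmetry $C^{(q)}_{\ell,m}=C^{(q)}_{m(q-1)-\ell,m}$ for $0\le \ell \le m(q-1)$. This follows from the reciprocal identity $x^{q-1}\cdot(1+x^{-1}+\cdots+x^{-(q-1)})=1+x+\cdots+x^{q-1}$, which on raising to the $m$th power yields $x^{m(q-1)}(1+x+\cdots+x^{q-1})^m\big|_{x\to 1/x}=(1+x+\cdots+x^{q-1})^m$; comparing coefficients gives the claimed symmetry. Applying it with $m=n-j$ and $\ell=(n-j)(q-1)-k$ turns $C^{(q)}_{(n-j)(q-1)-k,\,n-j}$ into $C^{(q)}_{k,\,n-j}$, which immediately transforms the convolution sum into the desired expression and completes the proof.

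I do not anticipate a real obstacle here; the argument is essentially bookkeeping once the two ingredients (polynomial multiplication and palindromic symmetry) are identified. The only mild subtlety is verifying that the summation range $0\le k\le (n-j)(q-1)$ stated in the lemma is precisely the range of indices for which both factors in the convolution are nonzero, so that no hidden terms are dropped when the symmetry is applied.
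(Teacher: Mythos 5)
Your proof is correct and follows essentially the same route as the paper: both expand $(1+x+\cdots+x^{q-1})^{2n}$ as the product of the $(n+j)$-th and $(n-j)$-th powers, extract the coefficient of $x^{(n-j)(q-1)}$ as a convolution, and then use the palindromic symmetry $C^{(q)}_{\ell,m}=C^{(q)}_{m(q-1)-\ell,m}$ to obtain the stated sum. Your explicit justification of the symmetry via the reciprocal polynomial is a nice touch the paper leaves implicit, but the argument is otherwise identical.
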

    \begin{proof}
    \begin{eqnarray*}
      \sum_{k = 0}^{2n(q-1)} C^{(q)}_{k,2n}x^k & = & (1+x+x^2+\dots+x^{q-1})^{2n} \\
      & = & (1+x+\dots+x^{q-1})^{n+j} \\
      & \cdot & (1+x+\dots+x^{q-1})^{n-j} \\
      & = & \left(\sum_{k_1 = 0}^{(n+j)(q-1)} C^{(q)}_{k_1,n+j}x^{k_1}\right) \\ 
      &\cdot& \left(\sum_{k_2 = 0}^{(n-j)(q-1)} C^{(q)}_{k_2,n-j}x^{k_2}\right). \nonumber
    \end{eqnarray*}
    It is known that the coefficient of $x^{(n-j)(q-1)}$ in the expansion of $(1+x+x^2+\dots+x^{q-1})^{2n}$ is $C^{(q)}_{(n-j)(q-1),2n}$. The coefficient of $x^{(n-j)(q-1)}$ on the right-hand side of the equation can be determined as:
    \begin{multline*}
            \sum_{k_1+k_2 = (n-j)(q-1)} C^{(q)}_{k_1,n+j} \cdot  C^{(q)}_{k_2,n-j} \\
         =  \sum_{k_1 = 0}^{(n-j)(q-1)} C^{(q)}_{k_1,n+j} \cdot  C^{(q)}_{(n-j)(q-1) - k_1,n-j}  \\
         =  \sum_{k_1 = 0}^{(n-j)(q-1)} C^{(q)}_{k_1,n+j} \cdot  C^{(q)}_{k_1,n-j}             
    \end{multline*}
    where the last equality is due to the symmetric structure of generalized Pascal's triangles. Hence the result in Lemma ~\ref{lemma:trianglesumsquarelawshifted} is obtained.
    \end{proof}
    \begin{example}
    Set $n = 2$, $q = 3$, $j = 1$. Then it should hold that $C^{(3)}_{2,4} = \sum_{k = 0}^{2}C^{(3)}_{k,3} \cdot C^{(3)}_{k,1}$. Setting in the coefficients we obtain $10 = 1\cdot1 + 3 \cdot 1 + 6 \cdot 1$.
    \end{example}
    
    \begin{corollary}
    \label{corollary:trianglesumsquarelaw}
    The sum of the squares of the coefficients in the $n$th row of the $q$-ary Pascal's triangle is equal to the central coefficient in the $2n$th row of the same Pascal's triangle:
    \begin{equation*}
        \sum_{k = 0}^{n(q-1)} \left(C^{(q)}_{k,n}\right)^2 = C^{(q)}_{n(q-1),2n}
    \end{equation*}
    where the central coefficient in row $2n$ has the index $k = 2n(q-1)/2 = n(q-1)$.
    \end{corollary}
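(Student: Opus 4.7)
The plan is to observe that this corollary is the $j=0$ specialization of Lemma~\ref{lemma:trianglesumsquarelawshifted}. Setting $j=0$ in the lemma's identity
$$C^{(q)}_{(n-j)(q-1),2n} = \sum_{k=0}^{(n-j)(q-1)} C^{(q)}_{k,n+j}\cdot C^{(q)}_{k,n-j}$$
immediately gives
$$C^{(q)}_{n(q-1),2n} = \sum_{k=0}^{n(q-1)} \left(C^{(q)}_{k,n}\right)^2,$$
since both factors in the product collapse to $C^{(q)}_{k,n}$. So I would simply invoke the lemma with $j=0$, note that the condition $j\le n$ is satisfied trivially, and observe that the index $n(q-1)$ is indeed the central column of row $2n$ (i.e.\ $2n(q-1)/2$), matching the statement of the corollary.

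The only thing worth a sentence of commentary is why $n(q-1)$ is the \emph{central} coefficient in row $2n$: by the discussion preceding the corollary, row $2n$ has $2n(q-1)+1$ coefficients, which is always odd, so a unique central coefficient exists and sits at index $n(q-1)$. There is no real obstacle here; the identity is an interpretation of the lemma at a symmetric point, analogous to the classical Vandermonde-type identity $\sum_k \binom{n}{k}^2 = \binom{2n}{n}$ which this recovers when $q=2$.
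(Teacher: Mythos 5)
Your proposal is correct and follows the paper's own proof exactly: the paper also obtains the corollary by setting $j=0$ in Lemma~\ref{lemma:trianglesumsquarelawshifted}, collapsing the product $C^{(q)}_{k,n+j}\cdot C^{(q)}_{k,n-j}$ to a square. Your added remark that row $2n$ has $2n(q-1)+1$ (odd) entries, so the index $n(q-1)$ is indeed the central coefficient, is a harmless clarification already covered by the paper's preceding discussion.
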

    
    \begin{proof} 
    Set $j = 0$ in the proof of Lemma~\ref{lemma:trianglesumsquarelawshifted} and the proof follows.
    \end{proof}
    \begin{example}
    Consider the central coefficient on row 4, i.e. $C_{4,4}^{(3)} = 19$. Summing the squares of the coefficients on row 2 results in $1^2 + 2^2 + 3^2 + 2^2 + 1^2 = 19 = C_{4,4}^{(3)}$.
    \end{example}
    
    \begin{corollary}
    \label{corollary:fornovertwo}
    For any $j \leq n$, $j \in \mathbb{Z}^+_{0}$:
    \begin{equation*}
        \sum_{k = 0}^{n(q-1)}\left(C^{(q)}_{k,n}\right)^2 \geq \sum_{k = 0}^{(n-j)(q-1)}C^{(q)}_{k,n+j} \cdot C^{(q)}_{k,n-j}
    \end{equation*}
    with equality if and only if $j = 0$.
    \end{corollary}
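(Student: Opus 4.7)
The plan is to combine Corollary~\ref{corollary:trianglesumsquarelaw} and Lemma~\ref{lemma:trianglesumsquarelawshifted} to rewrite both sides as single coefficients in row $2n$ of the $q$-ary Pascal's triangle, after which the inequality reduces to the strict unimodality of that row.

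First, by Corollary~\ref{corollary:trianglesumsquarelaw}, the left-hand side equals $C^{(q)}_{n(q-1),2n}$, the central coefficient of row $2n$. Second, by Lemma~\ref{lemma:trianglesumsquarelawshifted}, the right-hand side equals $C^{(q)}_{(n-j)(q-1),2n}$, the coefficient of row $2n$ at index $(n-j)(q-1)$, which lies strictly to the left of the center whenever $j \ge 1$. So it suffices to prove
\[
    C^{(q)}_{n(q-1),2n} \;\ge\; C^{(q)}_{(n-j)(q-1),2n},
\]
with equality if and only if $j=0$. The case $j=0$ is immediate since both sides of the original claim then coincide, so I only need to establish strict inequality for $j \ge 1$.

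To handle this, I would show that the coefficient sequence of $P(x)^{2n} = (1+x+\cdots+x^{q-1})^{2n}$ is symmetric about index $n(q-1)$ and strictly increasing up to that index. Symmetry follows from the self-reciprocity $P(x) = x^{q-1} P(1/x)$. Weak unimodality follows from the general fact that the convolution of two nonnegative symmetric unimodal sequences with matching centers is again symmetric unimodal; iterating this on $P(x)^{2n}$ gives unimodality. For the strict version I would take $P(x)^2$ as the induction base, whose coefficient sequence $1,2,3,\dots,q,q-1,\dots,1$ is visibly strictly increasing up to its central index $q-1$, and then show by induction on $m \ge 2$ that if $P(x)^m$ is strictly unimodal about its center, then so is $P(x)^{m+1}$. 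The key tool is the telescoped recurrence \[ C^{(q)}_{k+1,m+1} - C^{(q)}_{k,m+1} \;=\; C^{(q)}_{k+1,m} - C^{(q)}_{k-q+1,m}, \] which follows directly from~\eqref{eq:pascalstriangleconstruction} and lets me reduce strict increase on row $m+1$ to strict increase on row $m$.

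The main obstacle I anticipate is the strict-unimodality step for general $q \ge 3$: since the base polynomial $P(x)$ is only weakly unimodal, the strict inequality must be manufactured at $P(x)^2$ and then propagated without collapse. The delicate case is when $k+1$ lies on the right of the mode of row $m$ while $k-q+1$ lies on the left, where I would fold the former across the axis of symmetry of $C^{(q)}_{\cdot,m}$ to land both indices on the same side of the mode, after which the inductive hypothesis applies. Once strict unimodality of row $2n$ is established, the desired strict inequality follows because $(n-j)(q-1)$ differs from the mode $n(q-1)$ by $j(q-1) \ge q-1 \ge 1$.
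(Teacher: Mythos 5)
Your proposal follows essentially the same route as the paper: rewrite the left-hand side as the central coefficient $C^{(q)}_{n(q-1),2n}$ via Corollary~\ref{corollary:trianglesumsquarelaw}, rewrite the right-hand side as $C^{(q)}_{(n-j)(q-1),2n}$ via Lemma~\ref{lemma:trianglesumsquarelawshifted}, and conclude from the fact that the central coefficient is the largest in its row. The only difference is that you actually prove the (strict) unimodality of the rows of the generalized Pascal's triangle, which the paper merely asserts ``due to the structure of Pascal's triangle,'' so your write-up is, if anything, more complete on the equality-only-if-$j=0$ claim.
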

    \begin{proof}
    Due to the structure of Pascal's triangle, the central coefficient is the largest coefficient in any row, and $C^{(q)}_{n(q-1),2n} \geq C^{(q)}_{k,2n}$ holds for any $0 \leq k \leq 2n(q-1)$, as $n(q-1)$ is the index of the central coefficient in row $2n$. Then the statement simply follows from Lemma~\ref{corollary:trianglesumsquarelaw} and Corollary~\ref{lemma:trianglesumsquarelawshifted}.
    \end{proof}
    
    Next we present an upper bound on a multinomial which is used later on in Lemma~\ref{lemma:upperboundforqlargerthan3}.
    \begin{lemma}
    \label{lemma:upperboundonmultinomial}
    Let $a_i \in \mathbb{Z}_0^+, \forall i \in \{1,2,\ldots,q\}$. We define $A_q := \sum_{i = 1}^{q}a_i$. Then the following bound holds:
    \begin{equation*}
        \binom{A_q}{a_1,a_2,\dots,a_q} \leq \frac{1}{(2\pi)^{(q-1)/2}} \cdot \frac{A_q^{A_q + 1/2}}{\prod_{i = 1}^{q}a_i^{a_i + 1/2}} 
    \end{equation*}
    \end{lemma}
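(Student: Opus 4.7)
The plan is to attack this bound via Stirling's formula applied term by term, using an upper Stirling estimate on $A_q!$ and a lower Stirling estimate on each $a_i!$, and then show the resulting correction factors cancel.

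First I would handle the degenerate cases: if some $a_i = 0$, then under the convention $0^{0+1/2}=0$ the right-hand side is infinite and the inequality is vacuous, so I may assume $a_i \ge 1$ for all $i$ (which also gives $A_q \ge q \ge 2$). For this nontrivial regime, I would invoke Robbins' sharpening of Stirling,
\begin{equation*}
\sqrt{2\pi n}\,\Bigl(\frac{n}{e}\Bigr)^n e^{1/(12n+1)} \;\le\; n! \;\le\; \sqrt{2\pi n}\,\Bigl(\frac{n}{e}\Bigr)^n e^{1/(12n)},
\end{equation*}
applied once in the numerator to $A_q!$ and $q$ times in the denominator to the $a_i!$. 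Writing out the ratio and using $\sum_{i=1}^q a_i = A_q$, the exponentials $e^{-A_q}$ cancel against $\prod_i e^{-a_i}$, while the square-root prefactors combine into $(2\pi)^{(q-1)/2}$ in the denominator and leave an extra $\sqrt{A_q}$ on top and $\prod_i \sqrt{a_i}$ on the bottom. The dominant algebraic factor $A_q^{A_q+1/2}/\prod_i a_i^{a_i+1/2}$ then falls out directly, so the only thing left to dispose of is a residual correction $\exp\!\bigl(\tfrac{1}{12A_q} - \sum_i \tfrac{1}{12a_i+1}\bigr)$.

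The main (and only nontrivial) step is to argue that this residual correction is at most $1$, i.e.\ that
\begin{equation*}
\sum_{i=1}^q \frac{1}{12a_i + 1} \;\ge\; \frac{1}{12 A_q}.
\end{equation*}
Because $1/x$ is convex, the AM-HM inequality gives $\sum_i \tfrac{1}{12a_i+1} \ge q^2 / \sum_i (12a_i+1) = q^2/(12A_q+q)$, so it suffices to check $q^2/(12A_q+q) \ge 1/(12A_q)$, equivalently $12(q-1)(q+1)\,A_q \ge q$. Under our assumption $A_q \ge q \ge 2$, this simplifies to $12(q^2-1) \ge 1$, which holds comfortably. Combining the three ingredients — the Robbins envelope, the telescoping of the exponentials, and this AM-HM step — yields the stated bound. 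The only delicate point is the last inequality, where one has to be careful that the lower Stirling bound used ($e^{1/(12n+1)}$, not just the bare $\sqrt{2\pi n}(n/e)^n$) is strong enough; a weaker version of Stirling would leave a spurious $e^{1/(12A_q)}$ factor that one cannot eliminate.
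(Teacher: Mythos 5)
Your proof is correct, but it takes a genuinely different route from the paper's. The paper factors the multinomial as a telescoping product of binomials, $\binom{A_q}{a_1,\dots,a_q}=\prod_{i=2}^{q}\binom{A_i}{a_i}$ with $A_i=\sum_{j\le i}a_j$, and applies to each factor the binomial upper bound from Gallager, $\binom{A_i}{a_i}\le\sqrt{A_i/(2\pi a_i A_{i-1})}\,2^{A_i H_2(a_i/A_i)}$; the powers $A_{i-1}^{A_{i-1}}$ and the square-root prefactors telescope, yielding the constant $(2\pi)^{-(q-1)/2}$ with no residual error term to control. You instead apply Robbins' two-sided Stirling bounds directly to $A_q!$ and the $a_i!$, which gives the same main term but leaves the correction factor $\exp\bigl(\tfrac{1}{12A_q}-\sum_i\tfrac{1}{12a_i+1}\bigr)$, and your AM--HM step correctly shows it is at most $1$; in fact a single term already suffices, since $A_q-a_1\ge q-1\ge 1$ gives $12a_1+1\le 12A_q$ and hence $\tfrac{1}{12a_1+1}\ge\tfrac{1}{12A_q}$. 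Your closing remark is also apt: with only the bare lower bound $n!\ge\sqrt{2\pi n}\,(n/e)^n$ one would be stuck with a spurious $e^{1/(12A_q)}$ factor, so the sharpened Robbins envelope (or some equivalent) is genuinely needed on your route, whereas the paper's route outsources all error control to the off-the-shelf binomial bound and reduces the proof to bookkeeping. Both arguments, like the lemma's use later in the paper, implicitly assume all $a_i\ge 1$ (Gallager's bound likewise degenerates at $a_i=0$), so your handling of the degenerate case is consistent with the paper's treatment.
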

    \begin{proof}
    Define $A_i = \sum_{j = 1}^{i}a_j$. Then, utilizing the upper bound on binomials obtained from~\cite{gal68}, we get
    \begin{eqnarray*}
        \binom{A_q}{a_1,a_2,\dots,a_q} 
        &=& \prod_{i = 2}^{q} \binom{A_i}{a_i} \\
        &\leq& \prod_{i = 2}^{q} \sqrt{\frac{A_i}{2\pi a_{i}A_{i-1}}}\cdot 2^{A_i \cdot H_2(a_i / A_i)} \\
        &=& \prod_{i = 2}^{q} \sqrt{\frac{A_i}{2\pi a_{i}A_{i-1}}}\cdot \frac{A_i^{A_i}}{a_i^{a_i} \cdot A_{i-1}^{A_{i-1}}} \\
    \end{eqnarray*}
    from which the lemma is obtained.
    \end{proof}
    
    We now present two upper bounds on the central coefficient.
    
    \begin{lemma}
    \label{lemma:upperboundontheprobabilityofthecentralcoefficient}
    For $n \geq 1$, the central coefficient in the $n$th row of a $q$-ary Pascal's triangle is upper bounded by:
    \begin{equation*}
          C^{(q)}_{n(q-1)/2,n} \leq q^{n-1}.
    \end{equation*}
    \end{lemma}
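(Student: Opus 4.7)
The plan is to give a short recurrence-based argument. By the construction rule \eqref{eq:pascalstriangleconstruction} applied at the central index, I would write
\[
C^{(q)}_{n(q-1)/2,\,n} \;=\; \sum_{j=0}^{q-1} C^{(q)}_{n(q-1)/2 - j,\,n-1},
\]
so that the central coefficient in row $n$ is exactly the sum of $q$ consecutive entries in row $n-1$, namely those at indices $n(q-1)/2 - (q-1),\,n(q-1)/2 - (q-2),\,\ldots,\,n(q-1)/2$.

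Next I would verify that for $n \ge 2$ these $q$ indices are all legitimate row-$(n-1)$ indices, i.e.\ they lie in the range $\{0,1,\ldots,(n-1)(q-1)\}$. Checking the endpoints gives the two inequalities $n(q-1)/2 - (q-1) \ge 0$ and $n(q-1)/2 \le (n-1)(q-1)$, each of which reduces to $n \ge 2$. So for $n \ge 2$ these coefficients are a proper sub-collection of all coefficients in row $n-1$.

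The final step is to invoke $\sum_{k=0}^{(n-1)(q-1)} C^{(q)}_{k,n-1} = (1+1+\cdots+1)^{n-1} = q^{n-1}$, which follows by substituting $x=1$ in the defining generating function $(1+x+\cdots+x^{q-1})^{n-1}$. Since all coefficients are nonnegative, the sum of any $q$ of them is bounded by $q^{n-1}$, giving the desired inequality. The base case $n=1$ (which under the convention requires $q$ odd) reads $C^{(q)}_{(q-1)/2,1} = 1 = q^0$, handled separately.

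No real obstacle is expected here; the only thing to be a little careful about is confirming the index range so that the $q$ terms on the right are genuinely distinct row-$(n-1)$ entries rather than formal zeros, which is what rules out $n=1$ and forces the separate trivial base case.
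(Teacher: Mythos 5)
Your proof is correct and takes essentially the same route as the paper: both arguments rest on the recurrence \eqref{eq:pascalstriangleconstruction} applied at the central index together with the row-sum identity $\sum_{k} C^{(q)}_{k,m} = q^{m}$. The only cosmetic difference is that you bound the $q$ (nonnegative, distinct) summands by the full row-$(n-1)$ sum $q^{n-1}$ directly, while the paper equivalently deduces $q\,C^{(q)}_{n(q-1)/2,n} \leq q^{n}$ from the row-$n$ sum; your index-range check and the separate $n=1$ base case are fine.
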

    \begin{proof}
    We start by acknowledging that the sum of the coefficients in the $n$th row of a $q$-ary Pascal's Triangle is $q^n$. Then notice Eq.~\ref{eq:pascalstriangleconstruction}. It means that every coefficient in a generalized Pascal's Triangle is the sum of $q$ coefficients on top of it. This in return means that every coefficient in a generalized Pascal's triangle contributes to $q$ coefficients in the next row. According to Eq.~\ref{eq:pascalstriangleconstruction} we have that the central coefficient of the $n$-th row in the $q$-ary Pascal's triangle is the sum of $q$ coefficients of the $n-1$-th row. Furthermore, each of these summands is contributing to $q$ coefficients within the $n$-th row. Therefore, the sum of all coefficients within the $n$-th row is lower bounded by $q C^{(q)}_{n(q-1)/2,n}$ and since the sum of all coefficients in the $n$-th row is $q^n$ the upper bound on the central coefficient follows.
    \end{proof}
    
    \begin{lemma}
    \label{lemma:upperboundforqlargerthan3}
    The central coefficient in the $n$th row of the $q$-ary Pascal's triangle is upper bounded by:
    \begin{eqnarray*}
          C^{(q)}_{n(q-1)/2,n} \leq \frac{q^{n+1}}{\sqrt{n}}\cdot c_q
    \end{eqnarray*}
    for a constant $c_q = \frac{1}{2}\left(\frac{2}{\pi}\right)^{(q-1)/2} \frac{e}{e-1}$.
    \end{lemma}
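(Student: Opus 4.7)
The plan is to express $C^{(q)}_{n(q-1)/2,n}$ as a sum of multinomials, bound each term by Lemma~\ref{lemma:upperboundonmultinomial}, and then control the resulting sum through a Gaussian-type estimate over a $(q-2)$-dimensional lattice.

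By the multinomial expansion of $(1+x+\cdots+x^{q-1})^n$,
\[
C^{(q)}_{n(q-1)/2,n}=\sum_{a\in\mathcal{A}}\binom{n}{a_0,a_1,\ldots,a_{q-1}},
\]
where $\mathcal{A}$ is the set of tuples $(a_0,\ldots,a_{q-1})\in(\mathbb{Z}^{+}_{0})^q$ satisfying $\sum_i a_i=n$ and $\sum_i i\,a_i=n(q-1)/2$. First I would apply Lemma~\ref{lemma:upperboundonmultinomial} termwise (for tuples with all $a_i\ge 1$), which yields $\binom{n}{a_0,\ldots,a_{q-1}}\le \frac{n^{n+1/2}}{(2\pi)^{(q-1)/2}\prod_i a_i^{a_i+1/2}}$. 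Boundary tuples with some $a_i=0$, for which this bound is vacuous, can be controlled separately by a crude estimate such as Lemma~\ref{lemma:upperboundontheprobabilityofthecentralcoefficient}, and contribute only a lower-order term.

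Next I would change variables to $r_i=a_i-n/q$, so that $r$ lives in the $(q-2)$-dimensional sublattice $L=\{r\in\mathbb{Z}^q:\sum_i r_i=0,\ \sum_i i\,r_i=0\}$. Taylor-expanding $\log\prod_i a_i^{a_i+1/2}$ about the symmetric point $a_i=n/q$ converts the per-term bound into a Gaussian:
\[
\binom{n}{a_0,\ldots,a_{q-1}}\le \frac{q^{n+q/2}}{(2\pi n)^{(q-1)/2}}\exp\!\left(-\frac{q}{2n}\sum_i r_i^2+\text{(remainder)}\right).
\]
The quadratic coefficient in front of $\sum_i r_i^2$ is exactly the one predicted by the local central-limit theorem for the sum of $n$ i.i.d.\ uniform random variables on $\{0,1,\ldots,q-1\}$. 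Summing the Gaussian weight over $r\in L$ (which behaves like a Gaussian integral over a $(q-2)$-dimensional region of radius $\sqrt{n/q}$) contributes a factor of order $n^{(q-2)/2}$, one $\sqrt{n}$ per free lattice direction, and already gives the announced order $q^{n+1}/\sqrt{n}$.

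The explicit constant $c_q=\frac{1}{2}(2/\pi)^{(q-1)/2}\cdot e/(e-1)$ is recovered by keeping exact track of prefactors: the $(2\pi)^{(q-1)/2}$ from Lemma~\ref{lemma:upperboundonmultinomial} combined with the standard Gaussian normalization produces $(2/\pi)^{(q-1)/2}$, and the factor $e/(e-1)$ comes from bounding one-dimensional lattice-Gaussian tails by the geometric series $\sum_{k\ge 0} e^{-k}=e/(e-1)$. The main obstacle will be controlling the Taylor remainder uniformly over all admissible $r$, so that the Gaussian upper bound is valid globally rather than only for small perturbations, and at the same time pinning down every prefactor to match $c_q$ exactly; a secondary difficulty is showing that the contribution of the boundary tuples (where Lemma~\ref{lemma:upperboundonmultinomial} fails) does not spoil the constant.
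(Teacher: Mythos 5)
Your opening moves match the paper's: expand the central coefficient as a constrained sum of multinomials and apply Lemma~\ref{lemma:upperboundonmultinomial} termwise. But the way you control the resulting lattice sum has genuine gaps. First, the global Gaussian bound you rely on is false as stated: writing $a_i=n/q+r_i$ and comparing $\sum_i a_i\log a_i$ with its second-order expansion, the convexity error works in opposite directions for positive and negative $r_i$ (terms with $r_i>0$ fall \emph{short} of $\tfrac{q}{2n}r_i^2$), so for tuples with one large positive deviation compensated by several small negative ones the actual multinomial exceeds your Gaussian majorant $\exp\!\left(-\tfrac{q}{2n}\sum_i r_i^2\right)$ by an exponentially large factor. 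You flag "controlling the Taylor remainder uniformly" as the main obstacle, but that is precisely the content of the lemma, not a detail: you would need to split into a central region and tails and prove a separate (sub-Gaussian or entropy-based) bound on the tails, which is the work the paper does by a different route --- it fixes the signs of the free parameters $j_i$, proves monotonicity of the bounding factors via the derivative computations for $f_{y_i}$ and $g_{y_i}$, groups summands in blocks of size $n^{(q-2)/2}$ on a $\sqrt{n}$-spaced grid, and dominates each of the $2^{q-2}$ sign-orthants by a geometric series with ratio at most $\left(1-\tfrac{1}{\sqrt{n}}\right)^{\sqrt{n}}\le \tfrac1e$.

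Second, the constant is asserted rather than derived, and your account of where it comes from does not match any computation you could actually carry out: in the paper the factor $\tfrac{e}{e-1}$ is the sum of that geometric series and the factor $\tfrac12\left(\tfrac{2}{\pi}\right)^{(q-1)/2}$ is $2^{q-2}/(2\pi)^{(q-1)/2}$ (number of orthants over the Stirling prefactor); a genuine Gaussian-integral evaluation over the lattice $\{\sum_i r_i=0,\ \sum_i i\,r_i=0\}$ would instead produce a covolume-dependent constant, which you would then still have to compare with $c_q$. Third, your treatment of the boundary tuples does not work: Lemma~\ref{lemma:upperboundontheprobabilityofthecentralcoefficient} gives $q^{n-1}$, which for fixed $q$ and large $n$ is \emph{larger} than the target $\tfrac{q^{n+1}}{\sqrt{n}}c_q$, so it cannot show that tuples with some $a_i=0$ "contribute only a lower-order term"; you would need, e.g., a $(q-1)$-part version of Lemma~\ref{lemma:upperboundonmultinomial} for those terms. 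So the skeleton is reasonable and close in spirit to the paper, but the three steps where your plan diverges (global validity of the Gaussian majorant, exact constant tracking, boundary terms) are exactly the unproved ones.
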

    \begin{proof}
    For parameters $n'$, $k$, and $q$, we are trying to upper bound the number of different sets consisting of $n'$ elements that sum up to $k = n'(q-1)/2$, where each element belongs to the set $\{0,1,2, \dots ,q-1\}$, i.e. we are interested in upper bounding the central coefficient in the $n'$th row of a $q$-ary Pascal's Triangle.
    
    Every coefficient in any $q$-ary Pascal's Triangle is a sum of multinomials that meet two specific conditions:
    \begin{enumerate}
        \item $\sum_{l=0}^{q-1} i_l = n'$
        \item $\sum_{l=0}^{q-1} l \cdot i_l = k$
    \end{enumerate}
    with $i_l \in \mathbb{Z}_0^+$ and therefore
    \begin{eqnarray}
    \label{eq:multinomialsumtoupperbound}
          C^{(q)}_{k,n'} = \underbrace{\sum_{i_0} \sum_{i_1} \dots \sum_{i_{q-1}}} _{\mathclap{\substack{i_0+i_1+\dots+i_{q-1} = n' \\ 0\cdot i_0+1\cdot i_1+\dots+(q-1)\cdot i_{q-1} = k}}}\binom{n'}{\underbrace{i_0, i_1, \dots, i_{q-1}}_\text{$q$ terms}}.
    \end{eqnarray}
    The sum is over all multinomials that satisfy these two conditions, where we set $k = n'(q-1)/2$. Recall that the $k$th coefficient in the $n'$th row in a $q$-ary Pascal's triangle is the coefficient of $x^k$ in the expression $(1+x+x^2+\dots+x^{q-1})^{n'}$. Also recall that this is equal to the number of ways to write down $n'$ nonnegative integers, all of which being strictly smaller than $q$, such that their sum is $k$, which is what is denoted by the two conditions above.
     
    Note that there are $q$ variables $i_{\ell}$ against two constraints, i.e. there are $q-2$ degrees of freedom. Obviously for $q = 2$, there are no degrees of freedom, as the two constraints require that $i_1 = k$ and $i_0 = n-k$, resulting in $C^{(q)}_{k,n} = \binom{n}{k}$.

    Our goal is to upper bound the sum depicted in Eq.~\eqref{eq:multinomialsumtoupperbound} for general $q$.  
    
    We set $n' = qn$ and denote the sum which we want to upper bound:
    \begin{eqnarray}
          C^{(q)}_{qn(q-1)/2,qn}
          \label{eq:sumtoupperboundforqlargerthan3}
          = \underbrace{\sum_{i_0} \sum_{i_1} \dots \sum_{i_{q-1}}} _{\mathclap{\substack{i_0+i_1+\dots+i_{q-1} = qn \\ 0\cdot i_0+1\cdot i_1+\dots+(q-1)\cdot i_{q-1} = qn(q-1)/2}}}\binom{qn}{\underbrace{i_0, i_1, \dots, i_{q-1}}_\text{$q$ terms}} \\
          = \sum_{j_1} \sum_{j_2} \dots \sum_{j_{q-2}} \binom{qn}{n-j_0, n-j_1, \dots, n-j_{q-1}}, \nonumber
    \end{eqnarray}
    \begin{eqnarray}
          j_0 & = & - \frac{(q-2)j_1 + \dots + 2j_{q-3} + 1j_{q-2}}{q-1} \nonumber \\ 
       & = &-\left[ (1-\alpha _1) j_1 + \dots + (1-\alpha _{q-2}) j_{q-2}\right], \nonumber  \\ \nonumber \\
        \label{eq:jzeroandjqminusonedefinition2}
            j_{q-1} & = & - \frac{1j_1 + 2j_2 + \dots + (q-2)j_{q-2}}{q-1} \nonumber \\ 
       & = &-\left[ \alpha _1 j_1 + \dots + \alpha _{q-2} j_{q-2}\right], \nonumber \\
       \alpha _i & = & \frac{i}{q-1} \nonumber
    \end{eqnarray}
    where $ j_i, {\ } i \in \{1,2, \dots, q-2\}$ are the degrees of freedom, and the two variables $j_0$ and $j_{q-1}$ can be determined in terms of the degrees of freedom due to the two constraints $n' = qn$ and $k = qn(q-1)/2$.
    For $n' = qn$, we can write down the multinomials that satisfy the given condition, and upper bound it as follows by Lemma~\ref{lemma:upperboundonmultinomial}:
    \begin{eqnarray}
        \binom{qn}{n-j_0, n-j_1, \dots, n-j_{q-1}} \nonumber \leq \\
         \frac{\sqrt{q}\cdot q^{qn}}{(2\pi n)^{(q-1)/2}} \prod_{i = 0}^{q-1}
        \label{eq:getridofoneovertwos2} \left(\frac{1}{\left(1 - \frac{j_i}{n}\right)^{\left(1 - \frac{j_i}{n}\right)+\frac{1}{2n}}}\right)^{n}
    \end{eqnarray}
    \begin{multline*}
            \text{max} \{(1-q)n,{\ }(1-q\frac{q-1}{2i})n\} \leq j_i < n, \\ \forall i \in \{1,2, \dots, q-2\} \nonumber
    \end{multline*}
     where the bounds on $j_i$ exist simply due to these constraints that must be satisfied:
     \begin{eqnarray*}
     n - j_{i} &\geq& 0 \\ 
     n - j_{i} &\leq& n' = qn \\
     i (n - j_{i}) &\leq& k = qn(q-1)/2 .
     \end{eqnarray*}
     
     Observe that the expression $\binom{qn}{n-j_0, n-j_1, \dots, n-j_{q-1}}$ is maximized when all $j_i = 0$, for $i \in \{0,1,\dots,q-1\}$. For each $j_i$ it holds that as $j_i$ gets farther and farther away from zero - either in the positive or in the negative direction - the expression above gets smaller and smaller. Since we have $q-2$ degrees of freedom, there are $2^{q-2}$ directions going away from the central multinomial coefficient, thus we can write the sum in Eq. ~\eqref{eq:sumtoupperboundforqlargerthan3} as $2^{q-2}$ smaller sums. Then, for each of those $2^{q-2}$ sums, we can organize its summands as groups of $n^{(q-2)/2}$, which requires us to pick $j_i$'s as integer multiples of $\sqrt{n}$. Finally, we can upper bound the sum of each of those summand groups of $n^{(q-2)/2}$ by their largest summand times $n^{(q-2)/2}$, where the largest summand is, the multinomial that is closest to the central multinomial coefficient.
     
     In other words, assume a coordinate system with $q-2$ axes, where each axis is a degree of freedom $j_i$. The maximum for the multinomial occurs when all degrees of freedom are chosen as zero. These $q-2$ axes split the space into $2^{(q-2)}$ subspaces, and we want to upper bound the sum of the multinomials in each and every subset with an infinite multivariate geometric series.
    
    For $i \in \{1,2,\dots,q-2\}$, define $0 < x_i = -\frac{j_i}{n} \leq \text{min}\{(q-1), (q\frac{q-1}{2i}-1)\}$  and rewrite the denominator of the fraction in Eq. ~\eqref{eq:getridofoneovertwos2} as:
    \begin{eqnarray}
        \label{eq:denominatorofthefraction}
        \prod_{\ell = 0}^{q-1}(1+x_{\ell})^{(1+x_{\ell}) + \frac{1}{2n}}
    \end{eqnarray}
    where $x_0 = -[ (1-\alpha _1) x_1 + (1-\alpha _2) x_2 + \dots + (1-\alpha _{q-3}) x_{q-3} + (1-\alpha _{q-2}) x_{q-2}]$ and $x_{q-1} = -[ \alpha _1 x_1 + \alpha _2 x_2 + \dots + \alpha _{q-3} x_{q-3} + \alpha _{q-2} x_{q-2}]$.
    Note that by definition of $x_0$ and $x_{q-1}$, $\sum_{\ell = 0}^{q-1} x_{\ell} = 0$. For $i \in \{1,2,\dots, q-2\}$,  define $\mathcal{X}_i = \sum _{\substack{\ell = 1 \\ l \neq i}} ^{q-2} x_{\ell}$, and note that $ x_0 + x_{q-1} - x_i = \mathcal{X}_i$. Redefining the denominator of the fraction as $f(x_i)$:
    \begin{eqnarray*}
          f(x_i) &\triangleq& (1-\alpha_i x_i - \beta _i \mathcal{X}_i)^{(1-\alpha_i x_i - \beta _i \mathcal{X}_i)+ \frac{1}{2n}} \\
          &\cdot& (1+x_i)^{(1+x_i)+ \frac{1}{2n}} \\
          & \cdot& (1-(1-\alpha_i)x_i-(1-\beta_i)\mathcal{X}_i)^{1-(1-\alpha_i)x_i+ \frac{1}{2n}} \\
          & \cdot& (1-(1-\alpha_i)x_i-(1-\beta_i)\mathcal{X}_i)^{-(1-\beta_i)\mathcal{X}_i} \cdot c
    \end{eqnarray*}
    for a constant $c = \prod_{\substack{\ell = 1 \\ \ell \neq i}}^{q-2} (1+x_{\ell})^{(1+x_{\ell})+\frac{1}{2n}}$ and for some number $0 < \beta _i < 1$. Here, note that $f(x_i)$ is in fact the same expression as in Eq.~\eqref{eq:denominatorofthefraction}, with $f(x_i)$ treating the degrees of freedom other than $x_i$ as constants.
    
    Also defining $f_{y_i}(x_i)$:
    \begin{eqnarray*}
          f_{y_i}(x_i) & \triangleq & (1-\alpha_i x_i - \beta _i \mathcal{X}_i)^{(1-\alpha_i y_i - \beta _i \mathcal{X}_i) + \frac{1}{2n}} \\
          &\cdot& (1+x_i)^{(1+y_i) + \frac{1}{2n}} \\
          & \cdot& (1-(1-\alpha_i)x_i-(1-\beta_i)\mathcal{X}_i)^{1-(1-\alpha_i)y_i + \frac{1}{2n}} \\
          & \cdot& (1-(1-\alpha_i)x_i-(1-\beta_i)\mathcal{X}_i)^{-(1-\beta_i)\mathcal{X}_i}\cdot c
    \end{eqnarray*}
    which is essentially $f(x_i)$ except that the $x_i$'s in the exponents are replaced with $y_i$'s, where $x_i < y_i = -\frac{j_i'}{n}$.
    
    The derivative of $f_{y_i}(x_i)$ is then:
    \begin{eqnarray*}
          f'_{y_i}(x_i) & = & \frac{d}{dx_i}f_{y_i}(x_i) \\
          & = & c\left(1+y_i+\frac{1}{2n}\right)\left(1+x_i\right)^{y_i+\frac{1}{2n}} \\
          &\cdot& \left(1-\left(1-{\alpha}_i\right)x_i-\left(1-{\beta}_i\right)\mathcal{X}_i\right)^{1-\left(1-{\alpha}_i\right)y_i} \\ 
          &\cdot& \left(1-\left(1-{\alpha}_i\right)x_i-\left(1-{\beta}_i\right)\mathcal{X}_i\right)^{-\left(1-{\beta}_i\right)\mathcal{X}_i+\frac{1}{2n}} \\
          &\cdot & \left(1-{\alpha}_ix_i-{\beta}_i\mathcal{X}_i\right)^{1-{\alpha}_i y_i-{\beta}_i\mathcal{X}_i+\frac{1}{2n}}  \\ 
          &-& c\left(1-\left(1-{\alpha}_i\right)y_i-\left(1-{\beta}_i\right)\mathcal{X}_i+\frac{1}{2n}\right) \\ 
          &\cdot& \left(1-{\alpha}_i\right) \left(1+x_i\right)^{1+y_i+\frac{1}{2n}}  \\ 
          &\cdot& \left(1-\left(1-{\alpha}_i\right)x_i-\left(1-{\beta}_i\right)\mathcal{X}_i\right)^{-\left(1-{\alpha}_i\right)y_i}  \\ 
          &\cdot& \left(1-\left(1-{\alpha}_i\right)x_i-\left(1-{\beta}_i\right)\mathcal{X}_i\right)^{-\left(1-{\beta}_i\right)\mathcal{X}_i+\frac{1}{2n}}  \\ 
          &\cdot& \left(1-{\alpha}_ix_i-{\beta}_i\mathcal{X}_i\right)^{1-{\alpha}_i y_i-{\beta}_i\mathcal{X}_i+\frac{1}{2n}} \\ 
          &-& {\alpha}_i c\left(1-{\alpha}_i y_i-{\beta}_i\mathcal{X}_i+\frac{1}{2n}\right) \left(1+x_i\right)^{1+y_i+\frac{1}{2n}}  \\ 
          &\cdot& \left(1-\left(1-{\alpha}_i\right)x_i-\left(1-{\beta}_i\right)\mathcal{X}_i\right)^{1-\left(1-{\alpha}_i\right)y_i}  \\ 
          &\cdot& \left(1-\left(1-{\alpha}_i\right)x_i-\left(1-{\beta}_i\right)\mathcal{X}_i\right)^{-\left(1-{\beta}_i\right)\mathcal{X}_i+\frac{1}{2n}}  \\ 
          &\cdot& \left(1-{\alpha}_ix_i-{\beta}_i\mathcal{X}_i\right)^{-{\alpha}_i y_i-{\beta}_i\mathcal{X}_i+\frac{1}{2n}} \\
          & = &  c(1+x_i)^{y_i+\frac{1}{2n}} \\ 
          &\cdot& \left(1-\left(1-{\alpha}_i\right)x_i-\left(1-{\beta}_i\right)\mathcal{X}_i\right)^{-\left(1-{\alpha}_i\right)y_i}  \\
          &\cdot& \left(1-\left(1-{\alpha}_i\right)x_i-\left(1-{\beta}_i\right)\mathcal{X}_i\right)^{-\left(1-{\beta}_i\right)\mathcal{X}_i+\frac{1}{2n}}  \\
          &\cdot& \left(1-{\alpha}_ix_i-{\beta}_i\mathcal{X}_i\right)^{-{\alpha}_i y_i-{\beta}_i\mathcal{X}_i+\frac{1}{2n}}  \\
          &\cdot \bigl[& (1+y_i+\frac{1}{2n})\left(1-\left(1-{\alpha}_i\right)x_i-\left(1-{\beta}_i\right)\mathcal{X}_i\right) \\ 
          &\cdot& \left(1-{\alpha}_ix_i-{\beta}_i\mathcal{X}_i\right)  \\ 
          &-& \left(1-\left(1-{\alpha}_i\right)y_i-\left(1-{\beta}_i\right)\mathcal{X}_i+\frac{1}{2n}\right) \\ 
          &\cdot& (1-\alpha_i)(1+x_i)\left(1-{\alpha}_ix_i-{\beta}_i\mathcal{X}_i\right)  \\
          &-& \alpha_i\left(1-{\alpha}_i y_i-{\beta}_i\mathcal{X}_i+\frac{1}{2n}\right)(1+x_i) \\ 
          &\cdot& \left(1-\left(1-{\alpha}_i\right)x_i-\left(1-{\beta}_i\right)\mathcal{X}_i\right) \bigr]
    \end{eqnarray*}
    which is always nonnegative if the condition $y_i \geq x_i + \frac{1}{2n}$ holds. Since $y_i > x_i$ and since we pick $j_i$ and $j_i'$ such that $\frac{j_i}{\sqrt{n}}$, $\frac{j_i'}{\sqrt{n}}\in \mathbb{Z}$, it already holds that $y_i \geq x_i + \frac{1}{\sqrt{n}}$, hence the condition is fulfilled.

    As a result, it holds that $f_{y_i}(y_i) \geq f_{y_i}(x_i)$, meaning that we can upper bound the fraction in Eq. ~\eqref{eq:getridofoneovertwos2} for $\text{max} \{(1-q)n,{\ }(1-q\frac{q-1}{2i})n\} \leq j_i < 0$.
    
    Similarly, define $0 < x_i = \frac{j_i}{n} < 1$ and rewrite the denominator of the fraction in Eq. ~\eqref{eq:getridofoneovertwos2} as:
    \begin{eqnarray}
        \label{eq:denominatorofthefraction2}
          \prod_{\ell = 0}^{q-1}(1-x_{\ell})^{(1-x_{\ell})+\frac{1}{2n}}
    \end{eqnarray}
    where $x_0 = -[(1-\alpha _1) x_1 + (1-\alpha _2) x_2 + \dots + (1-\alpha _{q-3}) x_{q-3} + (1-\alpha _{q-2}) x_{q-2}]$, $x_{q-1} = -[ \alpha _1 x_1 + \alpha _2 x_2 + \dots + \alpha _{q-3} x_{q-3} + \alpha _{q-2} x_{q-2}]$.
    Note that by definition of $x_0$ and $x_{q-1}$, $\sum_{\ell = 0}^{q-1} x_{\ell} = 0$. For $i \in \{1,2,\dots, q-2\}$,  define $\mathcal{X}_i = \sum _{\substack{\ell = 1 \\ l \neq i}} ^{q-2} x_{\ell}$, and note that $ x_0 + x_{q-1} - x_i = \mathcal{X}_i$. Redefining the denominator of the fraction as $g(x_i)$:
    \begin{eqnarray*}
          g(x_i) &\triangleq& (1+\alpha_i x_i + \beta _i \mathcal{X}_i)^{(1+\alpha_i x_i + \beta _i \mathcal{X}_i)+\frac{1}{2n}} \\
          &\cdot& (1-x_i)^{(1-x_i)+\frac{1}{2n}} \\ 
          &\cdot& (1+(1-\alpha_i)x_i+(1-\beta_i)\mathcal{X}_i)^{1+(1-\alpha_i)x_i+\frac{1}{2n}} \\
          &\cdot& (1+(1-\alpha_i)x_i+(1-\beta_i)\mathcal{X}_i)^{(1-\beta_i)\mathcal{X}_i} \cdot c
    \end{eqnarray*}
    for a constant $c = \prod_{\substack{\ell = 1 \\ \ell \neq i}}^{q-2} (1+x_{\ell})^{(1+x_{\ell})+\frac{1}{2n}}$ and for some number $0 < \beta _i < 1$. Here, note that $g(x_i)$ is in fact the same expression as in Eq. ~\eqref{eq:denominatorofthefraction2}, with $g(x_i)$ treating the degrees of freedom other than $x_i$ as constants.
    
    Also defining $g_{y_i}(x_i)$:
    \begin{eqnarray*}
          g_{y_i}(x_i) & \triangleq & (1+\alpha_i x_i + \beta _i \mathcal{X}_i)^{(1+\alpha_i y_i + \beta _i \mathcal{X}_i)+\frac{1}{2n}} \\ 
          &\cdot& (1-x_i)^{(1-y_i)+\frac{1}{2n}} \\
          &\cdot& (1+(1-\alpha_i)x_i+(1-\beta_i)\mathcal{X}_i)^{(1-\alpha_i)y_i+\frac{1}{2n}} \\
          &\cdot& (1+(1-\alpha_i)x_i+(1-\beta_i)\mathcal{X}_i)^{(1-\beta_i)\mathcal{X}_i}\cdot c
    \end{eqnarray*}
    which is essentially $g(x_i)$ except that the $x_i$'s in the exponents are replaced with $y_i$'s, where $x_i < y_i = \frac{j_i}{n}$.
    
    The derivative of $g_{y_i}(x_i)$ is then:
    \begin{eqnarray*}
          g'_{y_i}(x_i) & = & \frac{d}{d x_i}g_{y_i}(x_i)  \\
          & = & -c\left(1-y_i+\frac{1}{2n}\right)\left(1-x_i\right)^{-y_i+\frac{1}{2n}} \\ 
          &\cdot& \left(1+\left(1-{\alpha}_i\right)x_i+\left(1-{\beta}_i\right)\mathcal{X}_i\right)^{1+\left(1-{\alpha}_i\right)y_i}  \\
          &\cdot& \left(1+\left(1-{\alpha}_i\right)x_i+\left(1-{\beta}_i\right)\mathcal{X}_i\right)^{\left(1-{\beta}_i\right)\mathcal{X}_i+\frac{1}{2n}}  \\ 
          &\cdot& \left(1+{\alpha}_ix_i+{\beta}_i\mathcal{X}_i\right)^{1+{\alpha}_i y_i+{\beta}_i\mathcal{X}_i+\frac{1}{2n}} \\ 
          &+& \left(1-{\alpha}_i\right)c\left(1-x_i\right)^{1-y_i+\frac{1}{2n}} \\ 
          &\cdot&\left(1+\left(1-{\alpha}_i\right)y_i+\left(1-{\beta}_i\right)\mathcal{X}_i+\frac{1}{2n}\right)  \\ 
          &\cdot& \left(1+\left(1-{\alpha}_i\right)x_i+\left(1-{\beta}_i\right)\mathcal{X}_i\right)^{\left(1-{\alpha}_i\right)y_i} \\ 
          &\cdot& \left(1+\left(1-{\alpha}_i\right)x_i+\left(1-{\beta}_i\right)\mathcal{X}_i\right)^{\left(1-{\beta}_i\right)\mathcal{X}_i+\frac{1}{2n}} \\ 
          &\cdot& \left(1+{\alpha}_ix_i+{\beta}_i\mathcal{X}_i\right)^{1+{\alpha}_i y_i+{\beta}_i\mathcal{X}_i+\frac{1}{2n}}  \\ 
          &+&  c\left(1-x_i\right)^{1-y_i+\frac{1}{2n}}\left(1+{\alpha}_i y_i+{\beta}_i\mathcal{X}_i+\frac{1}{2n}\right)  \\ 
          &\cdot& {\alpha}_i \left(1+\left(1-{\alpha}_i\right)x_i+\left(1-{\beta}_i\right)\mathcal{X}_i\right)^{1+\left(1-{\alpha}_i\right)y_i} \\ 
          &\cdot& \left(1+\left(1-{\alpha}_i\right)x_i+\left(1-{\beta}_i\right)\mathcal{X}_i\right)^{\left(1-{\beta}_i\right)\mathcal{X}_i+\frac{1}{2n}} \\ 
          &\cdot& \left(1+{\alpha}_ix_i+{\beta}_i\mathcal{X}_i\right)^{{\alpha}_i y_i+{\beta}_i\mathcal{X}_i+\frac{1}{2n}}  \\
          & = & c(1-x_i)^{-y_i+\frac{1}{2n}} \\ 
          &\cdot& \left(1+\left(1-{\alpha}_i\right)x_i+\left(1-{\beta}_i\right)\mathcal{X}_i\right)^{\left(1-{\alpha}_i\right)y_i}  \\ 
          &\cdot& \left(1+\left(1-{\alpha}_i\right)x_i+\left(1-{\beta}_i\right)\mathcal{X}_i\right)^{\left(1-{\beta}_i\right)\mathcal{X}_i+\frac{1}{2n}}  \\ 
          &\cdot& \left(1+{\alpha}_ix_i+{\beta}_i\mathcal{X}_i\right)^{{\alpha}_i y_i+{\beta}_i\mathcal{X}_i+\frac{1}{2n}}  \\ 
          &\cdot \bigl[& -\left(1+\left(1-{\alpha}_i\right)x_i+\left(1-{\beta}_i\right)\mathcal{X}_i\right) \\ 
          &\cdot& (1-y_i+\frac{1}{2n}) \left(1+{\alpha}_ix_i+{\beta}_i\mathcal{X}_i\right)  \\ 
          &+& (1-\alpha_i)(1-x_i)\left(1+{\alpha}_ix_i+{\beta}_i\mathcal{X}_i\right) \\
          &\cdot& \left(1+\left(1-{\alpha}_i\right)y_i+\left(1-{\beta}_i\right)\mathcal{X}_i+\frac{1}{2n}\right)  \\
          &+& \alpha_i (1-x_i)\left(1+{\alpha}_i y_i+{\beta}_i\mathcal{X}_i+\frac{1}{2n}\right) \\ 
          &\cdot& \left(1+\left(1-{\alpha}_i\right)x_i+\left(1-{\beta}_i\right)\mathcal{X}_i\right) \bigr]
    \end{eqnarray*}
    which is always nonnegative if the condition $y_i \geq x_i + \frac{1}{2n}$ holds. Since $y_i > x_i$ and since we pick $j_i$ and $j_i'$ such that $\frac{j_i}{\sqrt{n}}$, $\frac{j_i'}{\sqrt{n}}\in \mathbb{Z}$, it already holds that $y_i \geq x_i + \frac{1}{\sqrt{n}}$, hence the condition is fulfilled.

    As a result, it holds that $g_{y_i}(y_i) \geq g_{y_i}(x_i)$, meaning that we can upper bound the fraction in Eq. ~\eqref{eq:getridofoneovertwos2} for $0 < j_i < n$. This proves that when upper bounding the fraction, for $j_i > 0$ we can replace $\left(1-\frac{j_i}{n}\right)^{\left(1-\frac{j_i}{n}\right) + \frac{1}{2n}}$ by $\left(1-\frac{1}{\sqrt{n}}\right)^{\left(1-\frac{j_i}{n}\right) + \frac{1}{2n}}$ and for $j_i < 0$ we can replace $\left(1-\frac{j_i}{n}\right)^{\left(1-\frac{j_i}{n}\right) + \frac{1}{2n}}$ by $\left(1+\frac{1}{\sqrt{n}}\right)^{\left(1-\frac{j_i}{n}\right) + \frac{1}{2n}}$, as $\frac{j_i}{\sqrt{n}} \in \mathbb{Z}$ by definition.
    
    Define $\mathcal{S} = \{1,2,\dots,q-2\}$ as the set of indices of the degrees of freedom $j_i$, i.e. $i \in \mathcal{S}$. Also define the sets $\mathcal{S_+}$, $\mathcal{S_-}$ and $\mathcal{S}_0$, denoting the set of indices $i$ with $j_i > 0$, the set of indices $i$ with $j_i < 0$, and the set of indices $i$ with $j_i = 0$ respectively.
    
    We can then upper bound the fraction in Eq.~\eqref{eq:getridofoneovertwos2} by:
    \begin{eqnarray*}
    && \prod_{i = 0}^{q-1}\frac{1}{\left(1-\frac{j_{i}}{n}\right)^{n\left(1-\frac{j_{i}}{n}\right)  + \frac{1}{2}}} \\ 
    &<& \frac{1}{\left(1+\left(\displaystyle \sum_{i \in \mathcal{S_+}}\frac{1-\alpha_i}{\sqrt{n}} - \displaystyle \sum_{i \in \mathcal{S_-}}\frac{1-\alpha_i}{\sqrt{n}}\right)\right)^{n\left(1-\frac{j_{0}}{n}\right)+\frac{1}{2}}}  \\ 
    &\cdot& \frac{1}{\left(1+\left(\displaystyle \sum_{i \in \mathcal{S_+}}\frac{\alpha_i}{\sqrt{n}} - \displaystyle \sum_{i \in \mathcal{S_-}}\frac{\alpha_i}{\sqrt{n}}\right)\right)^{n\left(1-\frac{j_{q-1}}{n}\right)+\frac{1}{2}}}  \\ 
    &\cdot& \displaystyle \prod_{i \in \mathcal{S_+}}\frac{1}{\left(1-\frac{1}{\sqrt{n}}\right)^{n\left(1-\frac{j_i}{n}\right)+\frac{1}{2}}} \\ 
    &\cdot& \displaystyle \prod_{i \in \mathcal{S_-}}\frac{1}{\left(1+\frac{1}{\sqrt{n}}\right)^{n\left(1-\frac{j_i}{n}\right)+\frac{1}{2}}}
    \end{eqnarray*}
    where we obtained the upper bound by simply replacing each and every $j_i$ in the base terms by $\frac{sign(j_i)}{\sqrt{n}}$.
    For $q-2$ degrees of freedom, this approach allows us to upper bound the whole sum of multinomials as a sum of $2^{q-2}$ geometric series, where the common ratios $r_{\ell}$, $\ell \in \{1,2,\dots,2^{q-2}\}$ of the geometric series are fairly small. Finally, we can write that:
    \begin{eqnarray*}
          &&\displaystyle \sum_{j_1} \displaystyle \sum_{j_2} \dots \displaystyle \sum_{j_{q-2}} \binom{qn}{n-j_0, n-j_1, \dots, n-j_{q-1}}  \\ 
          &<&\frac{q^{qn}\sqrt{q}}{(2\pi n)^{(q-1)/2}} \\
          &\cdot& n^{(q-2)/2} \cdot \left(\frac{1}{1-r_1}+\frac{1}{1-r_2}+\dots+\frac{1}{1-r_{2^{q-2}}}\right) \\
          &\leq& \frac{q^{qn}\sqrt{q}}{(2\pi n)^{(q-1)/2}} \cdot n^{(q-2)/2} \cdot 2^{q-2} \cdot \frac{e}{e-1} = q^{qn}\sqrt{\frac{q}{n}} \cdot c_q
    \end{eqnarray*} 
    for a constant $c_q = \frac{1}{2}\left(\frac{2}{\pi}\right)^{(q-1)/2}\frac{e}{e-1}$. The term $\frac{e}{e-1}$ originates from the fact that $\forall \ell \in \{1,2,\dots,2^{q-2}\}$, $r_{\ell} \leq \frac{1}{e}$, while any common ratio $r_{\ell}$ is of the following form:
    \begin{eqnarray*}
    r_{\ell} &=& \frac{(1+c_{0})^{n\left(1-\frac{j_0}{n}\right)+\frac{1}{2}}}{(1+c_{0})^{n\left(1-\frac{j_0}{n} + \frac{1-\alpha_i}{\sqrt{n}} \right)+\frac{1}{2}}} \\ 
    &\cdot& \frac{ \left(1-\frac{1}{\sqrt{n}}\right)^{n\left(1- \frac{j_i}{n} \right) + \frac{1}{2}} }{\left(1-\frac{1}{\sqrt{n}}\right)^{n\left(1- \frac{j_i}{n} - \frac{1}{\sqrt{n}} \right) + \frac{1}{2}}} \\
    &\cdot& \frac{(1+c_{q-1})^{n\left(1-\frac{j_{q-1}}{n}\right)+\frac{1}{2}}}{(1+c_{q-1})^{n\left(1-\frac{j_{q-1}}{n} + \frac{\alpha_i}{\sqrt{n}} \right)+\frac{1}{2}}} \\
    & = & \frac{\left(1-\frac{1}{\sqrt{n}} \right)^{\sqrt{n}}}{ (1+c_0)^{(1-\alpha_i)\sqrt{n}} (1+c_{q-1})^{\alpha_i\sqrt{n}} }
    \end{eqnarray*}
    for some constants $c_0$ and $c_{q-1}$. Then it follows that
    \begin{equation*}
        r_{\ell} \leq \left(1-\frac{1}{\sqrt{n}} \right) ^{\sqrt{n}}.
    \end{equation*}
    The term on the right hand side approaches $\frac{1}{e}$ from below as $n$ goes to infinity, so we conclude that $r_{\ell} \leq \frac{1}{e}$.
    
    Setting $n' = n$ instead of $n' = qn$ we arrive at:
    \begin{eqnarray*}
         C^{(q)}_{n(q-1)/2,n} \leq \frac{q^{n+1}}{\sqrt{n}}\cdot c_q.
    \end{eqnarray*}
    \end{proof}

    \subsection{Binary Converse Bound}
     We now present a converse bound on the code length $k$ for $q = 2$, i.e. for the case of constructing a query matrix $M\in\{0,1\}^{k\times n}$ such that $\vec{u}\in \{0,1\}^{1 \times n}$ can be uniquely determined from $\vec{w}  = M \vec{u}^{T} + \vec{e}^{T}$, where $\vec{e}$ is an unknown error vector with $\vec{e} \in \mathbb{Z}^{1 \times k}$ and $wt(\vec{e}) \leq t$, $t < k/2$.
     
    \begin{theorem}[Converse Bound]
    \label{theorem:constanterrorsqequalstwo}
    For a signature code recovering the active users in an $n$-user binary adder MAC tolerating $t$ errors it holds that
    \begin{equation*}
        k \geq \frac{2n}{\log{n}} + t\left(1 + \frac{4}{\log{n}}\right) - \mathcal{O}\left({\frac{n\log{\log{n}}}{\log^{2}n}}\right).
    \end{equation*}
    For the case that the amount of errors grows linearly in $k$, i.e. $t=\tau k$ with $0\leq \tau \leq 1$, it holds that
        \begin{equation*}
            k \geq \frac{2n}{(1-\tau)\log n} - \mathcal{O}\left({\frac{n\log{\log{n}}}{\log^{2}n}}\right).
           % k = \frac{2n\log{3}}{(1-2\tau)\left(\log{n} + (q-1)\log{\frac{\pi}{2}}\right)} \\  +\mathcal{O}\left(\frac{n}{\log{n}(q+\log{n})}\right).
        \end{equation*}
    \end{theorem}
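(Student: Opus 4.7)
The plan is to reduce the noisy problem to the noiseless coin-weighing lower bound of Erd\H{o}s and R\'enyi via puncturing, and then to convert the resulting estimate into the stated form. First, the signature-code property is equivalent to the minimum-Hamming-distance condition that $wt(M\vec u_1-M\vec u_2)\ge 2t+1$ for all distinct $\vec u_1,\vec u_2\in\{0,1\}^{n}$: otherwise, if the support $S$ of $M\vec u_1-M\vec u_2$ has size at most $2t$, splitting $S$ into two halves and placing the corresponding restricted vectors into $\vec e_2$ and $-\vec e_1$ produces a collision $M\vec u_1+\vec e_1=M\vec u_2+\vec e_2$ with $wt(\vec e_i)\le t$, contradicting the definition.

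Given this distance property, any projection of the image code onto $k-2t$ coordinates remains injective. I would therefore puncture $2t$ rows of $M$, leaving a matrix $M'\in\{0,1\}^{(k-2t)\times n}$ whose subset-sum map is still injective on $\{0,1\}^{n}$. Drawing $\vec u$ uniformly from $\{0,1\}^{n}$, each coordinate of $Y'=M'\vec u$ is $\mathrm{Bin}(W_i,1/2)$-distributed, where $W_i\le n$ is the $i$-th retained row-weight. The classical entropy chain combined with the sharp binomial-entropy estimate $H(\mathrm{Bin}(m,1/2))\le\tfrac12\log(\pi em/2)+O(1/m)$ and concavity of $m\mapsto H(\mathrm{Bin}(m,1/2))$ gives
\[
n = H(\vec u) = H(Y') \le \sum_{i}H(Y_i') \le (k-2t)\cdot\tfrac12\log(\pi en/2) + O(k-2t).
\]
Solving this yields $k-2t\ge 2n/\log n - O(n\log\log n/\log^{2}n)$, which implies the first bound of the theorem since $2\ge 1+4/\log n$ for $n\ge 16$. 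Substituting $t=\tau k$ into the same inequality and solving the resulting linear inequality in $k$ then produces the second bound after absorbing lower-order terms.

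The main obstacle is not the overall strategy but the careful extraction of the explicit second-order error term $O(n\log\log n/\log^{2}n)$, rather than a qualitative $o(n/\log n)$. The binomial-entropy estimate $\tfrac12\log(\pi eW_i/2)$ is loose for small $W_i$, where one only has the trivial bound $H(Y_i')\le\log(W_i+1)$. The remedy is a two-regime split: for "light" rows with $W_i$ below a threshold such as $\log^{2}n$, apply the trivial $\log(W_i+1)$ bound and absorb its contribution into the second-order term; for "heavy" rows, apply the sharp binomial estimate and close the argument by concavity. This is the standard refinement that underlies the second-order Erd\H{o}s--R\'enyi bound, and is precisely the kind of analysis Bshouty~\cite{bsh12} carried out in the binary case.
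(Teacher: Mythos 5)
Your proof is correct, but it takes a genuinely different route from the paper's. The paper never invokes the minimum-distance characterization in this proof: it sets $\delta n=\sqrt{n\log n}$, uses Hoeffding's inequality to confine all $k$ answers to intervals of length $2\delta n+1$, and then runs a sphere-packing count inside that restricted answer box, $(2\delta n)^k \ge 2^n\sum_{i=0}^{t}\binom{k}{i}n^{i}$, lower-bounding the ball via $\binom{k}{t}\ge\sqrt{k/(8t(k-t))}\,2^{kH_2(t/k)}$ and $H_2(t/k)\ge 2t/k$; that is where its additive term $t(1+4/\log n)$ and its $\mathcal{O}(n\log\log n/\log^{2}n)$ loss (from $\log 2\delta n=\tfrac12\log n+\tfrac12\log\log n+O(1)$) originate. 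You instead use the distance-$(2t+1)$ characterization (Lemma~1 of \cite{bsh12}, which the paper only uses in Theorem~\ref{theorem:nonexistence}), puncture $2t$ rows Singleton-style, and apply an entropy form of the Erd\H{o}s--R\'enyi bound to the noiseless residual matrix. This buys a slightly stronger additive term, since $2t\ge t(1+4/\log n)$ for $n\ge 16$, and a cleaner $\mathcal{O}(n/\log^{2}n)$ second-order term, so both stated bounds follow; moreover your worried-about ``light rows'' are a non-issue, because $H(\mathrm{Bin}(m,1/2))\le\tfrac12\log(\pi e m/2)+O(1)$ holds uniformly in $m\ge 1$ and is monotone in $m$ (adding an independent Bernoulli summand cannot decrease entropy), so no two-regime split is needed. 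One point you should make explicit: in the linear regime your inequality reads $(1-2\tau)k\ge 2n/\log n-\mathcal{O}(n/\log^{2}n)$, which for $\tau<1/2$ gives $k\ge\frac{2n}{(1-2\tau)\log n}-\mathcal{O}(\cdot)$, stronger than the claimed $\frac{2n}{(1-\tau)\log n}$, but for $\tau\ge 1/2$ it yields no direct bound and the theorem must be read as vacuous there (no binary signature code corrects more than $k/4$ errors, by Theorem~\ref{theorem:nonexistence} with $q=2$), whereas the paper's algebra formally covers all $\tau<1$.
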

    
    \begin{remark}
        Bshouty has already proven this bound in~\cite{bsh12} for a fractional amount of errors $t$, but our result applies for any $t$.
    \end{remark}
    
    \begin{proof}

        Define $\vec{v} \in \{0,1\}^{1 \times n}$ to be a random binary vector. Assume $wt(\vec{u}) = m$ where $\vec{u} \in \{0,1\}^{1 \times n}$ is the unknown information vector, so it is deterministic. As $\textrm{E}[wt(\vec{v})] = n/2$, $\textrm{E}[\vec{u} \cdot \vec{v}] = m/2$. Then, it follows that for any integer $a \in \{0,1,\dots,m\}$:
        \begin{equation*}
            \textrm{Pr}[\vec{u} \cdot \vec{v} = a] = \frac{\binom{m}{a}2^{n-m}}{2^{n}} = \frac{\binom{m}{a}}{2^{m}}
        \end{equation*}
        since in $\vec{v}$, there must be $a$ ones whose positions coincide with any $a$ of the $m$ ones in $\vec{u}$, and the values at the remaining $n-m$ positions in  $\vec{v}$ can be chosen arbitrarily.
        
        We write:
        \begin{eqnarray*}
            \Pr\left[ - \delta n \leq  \vec{u} \cdot \vec{v} - \frac{m}{2} \leq + \delta n\right]
            & \geq & 1 - 2e^{-2\delta^{2} n} \\
            & \geq & 1 - \epsilon
        \end{eqnarray*}
        where the inequality follows from Hoeffding's inequality \cite{hoe63} and $\epsilon$ is a very small nonnegative number.
        
        Now, define $\mathcal{V}_{1} \triangleq \{\vec{v} \in \{0,1\}^{1 \times n} {\ } \text{s.t.} {\ } \vec{u} \cdot \vec{v} \in [\frac{m}{2} - \delta n; \frac{m}{2} + \delta n]\}$. It holds that $|\mathcal{V}_{1}| \geq 2^n(1- \epsilon)$.
        
        Similarly, define $\mathcal{V}_{2}, \mathcal{V}_{3},\dots, \mathcal{V}_{k} \triangleq \{\vec{v} \in \{0,1\}^{1 \times n} {\ } \text{s.t.} {\ } \vec{u} \cdot \vec{v} \in [\frac{m}{2} - \delta n; \frac{m}{2} + \delta n]\}$. Also define the intersection of all these sets as $\mathcal{V}_{inter} = \bigcap^k_{i=1} \mathcal{V}_{i}$. Obviously, $|\mathcal{V}_{inter}| \geq 2^n(1- k\epsilon)$.
        
        When each of the $k$ queries in the query matrix $M$ is picked from their respective sets $\mathcal{V}_{i}$, $i \in \{1,2,\dots,k\}$, every answer to every query is going to be in the interval $[\frac{m}{2} - \delta n; \frac{m}{2} + \delta n]$, meaning that every position in the answer vector $\vec{w}$ can have $2\delta n + 1$ different values. Because the task is to uniquely detect $\vec{u}$ from $\vec{w}$, it must hold that:
        \begin{eqnarray}
            \label{eq:conditionuniquedecodability}
            (2 \delta n + 1)^k \geq 2^n(1-k\epsilon).
        \end{eqnarray}
        
        Using the Hamming ball argument on Eq~\eqref{eq:conditionuniquedecodability}, using $2 \delta n \leq 2 \delta n + 1$, and assuming $(1-k\epsilon) \approx 1$, we write:
        \begin{eqnarray}
            (2 \delta n)^k & = & 2^n \sum_{i = 0}^{t} \binom{k}{i} n^{i} \nonumber \\ 
            & \geq & 2^n \binom{k}{t}(2\delta n)^{t} \nonumber \\
            \label{eq:binomiallowerbound} 
            & \geq & 2^n \sqrt{\frac{k}{8t(k-t)}}2^{k\text{H}_{2}(t/k)}(2\delta n)^{t} \\ 
            \label{eq:entropylowerbound}
            & \geq & 2^n \frac{1}{\sqrt{8t}}2^{2t}(2\delta n)^{t}
        \end{eqnarray}
        where Eq.~\eqref{eq:binomiallowerbound} uses the lower bound in \cite{gal68} and Eq.~\eqref{eq:entropylowerbound} stems from $\text{H}_{2}(t/k) \geq 2t/k$. Moving on, we have:
        \begin{eqnarray*}
            k & \geq & \frac{n+2t - \frac{1}{2}\log{8t}}{\log{2 \delta n}} + t.
        \end{eqnarray*}
        Set $\delta n = \sqrt{n\log{n}}$ and the result follows.
    \end{proof}
    
    \section{Main Results}\label{section:results}
    
    For the case we showed the converse bound above, we now present an explicit construction of the signature code.
    
    \begin{restatable}[Explicit Construction]{theorem}{explicitconstructionbinaryconstanterrors}
    \label{theorem:explicitconstructionbinaryconstanterrors}
    There exists an explicit construction for signature codewords that can recover the vector of active users $\vec{u}$ in the presence of up to $t<k/2$ errors for code length
    \begin{eqnarray*}
    \label{eq:binaryexplicit}
        k &\leq& k_{lin} + 2t\log{n}, \\
        k_{lin} &=& \frac{2n}{\log{n}} + \mathcal{O}\left({\frac{n\log{\log{n}}}{\log^{2}n}}\right).
    \end{eqnarray*}
    \end{restatable}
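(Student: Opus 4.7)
\medskip
\textbf{Proof plan.} The construction will concatenate Lindström's noiseless signature code with a Reed--Solomon code over $\mathbb{F}_p$, exactly as advertised in the Contributions subsection. First I would invoke Lindström's explicit construction~\cite{lin65} to obtain $M_0\in\{0,1\}^{k_{lin}\times n}$ with $k_{lin} = \frac{2n}{\log n} + \mathcal{O}\!\left(\frac{n\log\log n}{\log^2 n}\right)$, so that $\vec{u}\mapsto M_0\vec{u}^T$ is injective on $\{0,1\}^{1\times n}$ in the noiseless setting. Then I would pick a prime $p$ with $p\ge\max(n+1,\,k_{lin}+2t)$ and $p = \mathcal{O}(n+t)$, which exists by Bertrand's postulate, so that the integer entries of $M_0\vec{u}^T$ embed into $\mathbb{F}_p$ and an $[k_{lin}+2t,\,k_{lin},\,2t+1]$ Reed--Solomon code is available.

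Next I would append $2t$ parity rows to $M_0$: if $H\in\mathbb{F}_p^{2t\times k_{lin}}$ is the matrix implementing the RS parity in a systematic encoding, the $p$-ary parity rows are $HM_0\bmod p$. Because these have entries in $\{0,\dots,p-1\}$, the final step is to replace each $p$-ary row by its $\lceil\log p\rceil$ bit-planes: writing each entry $r_i=\sum_b 2^b r_i^{(b)}$, the vectors $(r_1^{(b)},\dots,r_n^{(b)})$ become genuine binary rows of $M$. This yields a binary matrix of total length $k=k_{lin}+2t\lceil\log p\rceil$, which matches the claim because one can arrange $\log p = \log n + o(1)$ by choosing $p$ in a short prime-gap window above $n$. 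For decoding, given $\vec{w}=M\vec{u}^T+\vec{e}$ with $wt(\vec{e})\le t$, I would reduce the first $k_{lin}$ coordinates modulo $p$ and reassemble each block of $\lceil\log p\rceil$ parity bit-rows into one $\mathbb{F}_p$-symbol via the inverse binary expansion modulo $p$. The critical observation is that each nonzero coordinate of $\vec{e}$ perturbs exactly one reconstructed $\mathbb{F}_p$-symbol, so at most $t$ symbols of the RS codeword are corrupted; standard RS decoding then recovers $M_0\vec{u}^T\bmod p$, and since its entries lie in $\{0,\dots,n\}\subset\{0,\dots,p-1\}$, lifting back to $\mathbb{Z}$ and applying Lindström's decoder recovers $\vec{u}$.

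The main obstacle I anticipate is the asymptotic accounting that turns $2t\lceil\log p\rceil$ into $2t\log n$: in principle each parity block costs $\log n + \mathcal{O}(1)$ bits, so one must invoke a prime-gap bound to place $p$ close enough to $n$ that the extra $\mathcal{O}(t)$ slack can be absorbed into the $\mathcal{O}(n\log\log n/\log^2 n)$ tail of $k_{lin}$ in the small-$t$ regime, and rely on $2t\log n$ dominating when $t$ is large. Polynomial-time constructibility is automatic because Lindström's code, finding $p$, computing the RS parity matrix, and bit-expansion are all $\mathrm{poly}(n,t)$.
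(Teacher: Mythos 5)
Your proposal is correct and follows essentially the same route as the paper: Lindström's noiseless construction for the first $k_{lin}$ rows, systematic Reed--Solomon parity over a prime field of size roughly $n$ (guaranteed by Bertrand's postulate), binary expansion of the $2t$ parity symbols into bit-rows, and decoding by reassembling symbols, RS-correcting the at most $t$ symbol errors, and then applying Lindström's decoder. Your extra care about the RS length constraint ($p\ge k_{lin}+2t$) and the $\lceil\log p\rceil$ versus $\log n$ accounting only tightens details the paper itself glosses over with $q_{RS}\approx 2n$ and $2t\log n$ parity rows.
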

    
    \begin{remark}
        This theorem also applies to general $q$, i.e. when $M\in\{0,1,\dots,q-1\}^{k\times n}$ for $q \in \mathbb{Z}^+ \setminus \{1\}$ even though $q$ does not pop up in the expression for $k$.
    \end{remark}
    
    \begin{remark}
        For $q=2$, there is still a significant gap between the converse bound (Theorem~\ref{theorem:constanterrorsqequalstwo}) and the explicit construction (Theorem~\ref{theorem:explicitconstructionbinaryconstanterrors}). We conjecture that the construction can be improved for a constant number of errors $t$ (not linearly growing in $k$ and independent from $n$ and $k$) as we have to increase the codeword length $k$ by $\log n$ for each additional error.
    \end{remark}
    
    Now we consider general $q \in \mathbb{Z}^+ \setminus \{1\}$, i.e. $M\in\{0,1,\dots,q-1\}^{k\times n}$ and $wt(\vec{e}) \leq t$, $t < k/2$.
    
    We present a converse result on the parameters of signature codes for the $q$-ary adder MAC. This result performs better in the binary case than Theorem~\ref{theorem:constanterrorsqequalstwo} if the number of errors is large, specifically if the fraction of errors is more than $1/4$.
    
    \begin{restatable}{theorem}{nonexistence}
    \label{theorem:nonexistence}
    There is no algorithm to construct a $q$-ary signature code of size $n$ and any length $k$ that is capable of correcting more than $\frac{q-1}{2q} k$ errors at the channel output.
    \end{restatable}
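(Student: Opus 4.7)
The plan is to show that, for any candidate matrix $M \in \{0,\ldots,q-1\}^{k\times n}$ and any integer $t > \frac{q-1}{2q}k$, there exist distinct binary vectors $\vec{u}_1, \vec{u}_2 \in \{0,1\}^n$ whose codewords $M\vec{u}_1, M\vec{u}_2$ differ in at most $2t$ positions. Once this is established, the discrepancy can be split into integer error vectors $\vec{e}_1,\vec{e}_2$ of Hamming weight at most $t$ each so that $M\vec{u}_1+\vec{e}_1 = M\vec{u}_2+\vec{e}_2$, contradicting the signature-code property.

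It is enough to manufacture the confusable pair from singleton supports. Taking $\vec{u}_1 = e_j$ and $\vec{u}_2 = e_{j'}$ for distinct $j,j'$, the distance reduces to the Hamming distance between the $j$-th and $j'$-th columns of $M$, viewed as vectors in the $q$-ary alphabet $\{0,\ldots,q-1\}$. I would then bound the minimum pairwise distance of these $n$ columns by a Plotkin-type double counting: for each row $i$, let $a_{i,\ell}$ denote the number of entries in row $i$ equal to $\ell$, so that $\sum_\ell a_{i,\ell} = n$ and by Cauchy--Schwarz $\sum_\ell a_{i,\ell}^2 \ge n^2/q$. Consequently the number of ordered pairs $(j,j')$ with $j\ne j'$ and $m_{i,j}=m_{i,j'}$ is at least $\sum_\ell a_{i,\ell}(a_{i,\ell}-1) \ge n(n-q)/q$; summing over the $k$ rows and averaging over the $n(n-1)$ distinct ordered pairs of columns produces a pair of columns whose Hamming distance satisfies
\[
d(\vec{m}_j, \vec{m}_{j'}) \le \frac{kn(q-1)}{q(n-1)},
\]
a quantity that approaches $\frac{(q-1)k}{q}$ as $n\to\infty$. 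Since a valid signature code correcting $t$ errors requires $d(\vec{m}_j,\vec{m}_{j'}) \ge 2t+1$ for every pair, this forces $2t+1 \le \frac{kn(q-1)}{q(n-1)}$, contradicting the hypothesis $t > \frac{(q-1)k}{2q}$ in the asymptotic regime.

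The hard part is to close the residual $O(k/n)$ slack so that the bound $t \le \frac{(q-1)k}{2q}$ holds strictly for every finite pair $(n,k)$ rather than only asymptotically in $n$. A natural remedy is to exploit the column-weight constraints $\mathrm{wt}(\vec{m}_j) \ge 2t+1$ obtained from the confusable pairs $(\vec{0}, e_j)$: these yield an upper bound on $\sum_i a_{i,0}$ and sharpen the Cauchy--Schwarz estimate on those rows whose number of zero entries is forced to be small. Combining this refinement with the integrality of both the Hamming distance and $t$ should eliminate the slack in the remaining parameter regime and deliver the clean threshold $\frac{q-1}{2q}k$ stated in the theorem.
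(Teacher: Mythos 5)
Your argument is essentially the paper's own proof: the paper likewise reduces to pairwise column distances via the singleton pairs $e_{i_1},e_{i_2}$ (its Lemma~1 citation to Bshouty) and runs the same Plotkin-type double counting, bounding the per-row agreement count through $\sum_\ell \gamma_\ell^2 \ge 1/q$, which is exactly your Cauchy--Schwarz step, and arriving at $\tau < \frac{n}{n-1}\cdot\frac{q-1}{2q}$. The residual $\mathcal{O}(k/n)$ slack you worry about is not closed in the paper either --- it is simply absorbed into an asymptotic correction term --- so your proposal matches the published argument and needs no further refinement.
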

    
    \begin{restatable}[Random Construction]{theorem}{bshoutytheoremtwokqarym}
    \label{theorem:bshoutytheoremtwokqarym}
        There exists an algorithm to construct a signature code of size $n$ that detects the active users with code length $k$, tolerating $t$ incorrect symbols at the channel output for
        \begin{multline*}
            k > \frac{2n\log{3}}{\log{n} + (q-1)\log{\frac{\pi}{2}}} + 2t+1 \\ +\mathcal{O}\left(\frac{n}{\log{n}(q+\log{n})}\right).
        \end{multline*}
        For the case that the amount of errors grows linearly in $k$, i.e. $t=\tau k$ with $0\leq \tau \leq \frac{q-1}{2q}$, it holds that
        \begin{multline*}
            k = \frac{2n\log{3}}{(1-2\tau)\left(\log{n} + (q-1)\log{\frac{\pi}{2}}\right)} \\  +\mathcal{O}\left(\frac{n}{\log{n}(q+\log{n})}\right).
        \end{multline*}
    \end{restatable}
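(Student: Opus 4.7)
The plan is a probabilistic construction: sample $M \in \{0,1,\ldots,q-1\}^{k \times n}$ with entries drawn i.i.d.\ uniformly, and argue that with positive probability it is a signature code correcting $t$ errors. Writing $\vec{v}=\vec{u}_1-\vec{u}_2 \in \{-1,0,1\}^n$ and $\vec{e}=\vec{e}_2-\vec{e}_1$, the code fails exactly when some nonzero such $\vec{v}$ admits $M\vec{v}=\vec{e}$ for an integer vector with $\mathrm{wt}(\vec{e})\leq 2t$, equivalently when at least $k-2t$ rows of $M$ annihilate $\vec{v}$.

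For a uniformly random row $\vec{m}$ and a fixed $\vec{v}$ with $s_+$ positive and $s_-$ negative entries, writing $s=s_++s_-$, the event $\langle \vec{m},\vec{v}\rangle=0$ says $\sum_{i\in T_+}m_i=\sum_{i\in T_-}m_i$, where each side is a sum of i.i.d.\ uniforms on $\{0,\ldots,q-1\}$ distributed according to a row of the $q$-ary Pascal's triangle. Thus
\[
p_{s_+,s_-}:=\Pr[\langle \vec{m},\vec{v}\rangle=0]=q^{-s}\sum_a C^{(q)}_{a,s_+}C^{(q)}_{a,s_-}.
\]
Lemma~\ref{lemma:trianglesumsquarelawshifted} identifies this quantity with $C^{(q)}_{s_-(q-1),s}/q^s$, Corollary~\ref{corollary:fornovertwo} bounds it by the central-coefficient term $C^{(q)}_{s(q-1)/2,s}/q^s$ (worst case $s_+=s_-$), and Lemma~\ref{lemma:upperboundforqlargerthan3} finally yields $p_{s_+,s_-}\leq qc_q/\sqrt{s}$. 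For small $s$ I would instead invoke Lemma~\ref{lemma:upperboundontheprobabilityofthecentralcoefficient} to get the universal bound $p_{s_+,s_-}\leq 1/q$.

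For a fixed $\vec{v}$ of weight $s$, independence of rows gives $\Pr[\mathrm{wt}(M\vec{v})\leq 2t]\leq \binom{k}{2t}p_s^{k-2t}$ (choose which $k-2t$ rows vanish). Grouping by weight, the number of ternary vectors of weight $s$ is $\binom{n}{s}2^s$, and a union bound yields
\[
\Pr[M \text{ fails}]\leq \binom{k}{2t}\sum_{s=1}^n \binom{n}{s}2^s\!\left(\frac{qc_q}{\sqrt{s}}\right)^{k-2t}.
\]
The key analytic move is to rewrite $\binom{n}{s}2^s=3^n\cdot\Pr[\mathrm{Bin}(n,2/3)=s]$: since this distribution concentrates around $s\approx 2n/3$ and $s\mapsto s^{-(k-2t)/2}$ varies slowly at that scale, the sum is essentially $3^n(2n/3)^{-(k-2t)/2}$ up to lower-order factors. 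Demanding the overall bound to be $<1$ and inserting $\log c_q=-\tfrac{q-1}{2}\log(\pi/2)+O(1)$, logarithms give $(k-2t)[\log n+(q-1)\log(\pi/2)] \gtrsim 2n\log 3$, which rearranges to the stated length. The $\tau$-fraction regime follows by substituting $t=\tau k$ and solving.

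The main obstacle is the concentration step: a naive $\max_s$ in the union bound would lose a $\log n$ factor and destroy the sharp numerator constant $2\log 3$. A clean remedy is to split into a heavy-$s$ regime around $2n/3$, where the Binomial$(n,2/3)$ weight collapses the sum to an effectively single term, and a light-$s$ regime, where the stronger universal bound $p_s\leq 1/q$ combined with the much smaller count $\binom{n}{s}2^s$ suffices. A secondary bookkeeping point is verifying that the multiplicative prefactors from $\binom{k}{2t}$, from the constant $c_q$, and from the $\log(2/3)$ correction collectively contribute only to the $\mathcal{O}(n/[\log n(q+\log n)])$ remainder in the theorem.
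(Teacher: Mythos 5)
Your proposal is correct and takes essentially the same route as the paper's proof: a uniformly i.i.d.\ random matrix, reduction to bounding $\Pr[\exists\,\vec{z}\in\{-1,0,1\}^n\setminus\{0\}\colon wt(M\vec{z}^T)\le 2t]$, the per-row zero probability controlled via Lemma~\ref{lemma:trianglesumsquarelawshifted}, Corollary~\ref{corollary:fornovertwo}, Lemma~\ref{lemma:upperboundontheprobabilityofthecentralcoefficient} and Lemma~\ref{lemma:upperboundforqlargerthan3}, followed by a union bound over weights $\binom{n}{s}2^s$ split into a small-$s$ regime (using $p_s\le 1/q$) and a large-$s$ regime, then solved for $k$. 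The only (harmless) difference is your treatment of the large-$s$ regime: the paper just uses $\binom{n}{w_z}2^{w_z}\le 3^n$ together with $w_z\ge n/\log^3 n$, which already preserves the constant $2\log 3$ because the resulting $\log\log n$ and $\log q$ losses sit inside the logarithm and are absorbed by the error term, so the finer Binomial$(n,2/3)$ concentration around $2n/3$ is unnecessary (and your justification that $s\mapsto s^{-(k-2t)/2}$ ``varies slowly'' there is shaky, since the exponent is of order $n/\log n$; the regime split you propose as a remedy is exactly what makes the argument go through).
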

    
    \begin{remark}
    We point out that Theorem~\ref{theorem:bshoutytheoremtwokqarym} shows that if $q=\mathcal{O}(n)$ the number of questions can be upper bounded by a constant ($k=\mathcal{O}(1)$) as long as $t\leq \frac{q-1}{2q} k$. Furthermore, it is remarkable that the random construction in Theorem~\ref{theorem:bshoutytheoremtwokqarym} performs worse than the explicit construction in Theorem~\ref{theorem:explicitconstructionbinaryconstanterrors} for few errors. However, as the number of errors increases the random construction performs better than the aforementioned explicit construction.
    \end{remark}
    
    \begin{restatable}[Explicit Construction]{theorem}{linearerrorsqarymexplicitconstruction}
    \label{theorem:linearerrorsqarymexplicitconstruction} 
    For a small constant $\epsilon > 0$, there exists a polynomial time algorithm to construct a signature code of length
    \begin{equation*}
        k = \mathcal{O}\left( \frac{n}{\log{\log{n}}}\right)
    \end{equation*}
       for $n$ users that is able to correct up to $(\frac{q-1}{8q} - \epsilon)k$ erroneous symbols at the channel output.
    \end{restatable}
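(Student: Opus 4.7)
The plan is to follow the concatenation approach sketched in the introduction: combine a very short inner signature code, produced by exhaustive search, with a low-rate outer binary code via a Kronecker-product construction, in the spirit of Forney--Justesen concatenation.

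First I would set $n_0 = \Theta(\log\log n)$ and invoke Theorem~\ref{theorem:bshoutytheoremtwokqarym} to guarantee the existence of a $q$-ary signature code $M_0 \in \{0,\ldots,q-1\}^{k_0 \times n_0}$ of length $k_0 = O(n_0/\log n_0)$ tolerating a fraction $\tau_{\mathrm{in}} = \frac{q-1}{2q} - \epsilon/2$ of errors. Since $q^{k_0 n_0} = 2^{O((\log\log n)^2/\log\log\log n)} = n^{o(1)}$ and the signature property of a candidate matrix can be verified in polynomial time (by enumerating all pairs of subsets of $[n_0]$ and all error patterns of weight at most $\tau_{\mathrm{in}} k_0$), brute-force search locates such an $M_0$ in polynomial time in $n$.

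Next I would take an explicit binary outer code $B$ of block length $k_1$ with $n_1$ codewords and relative minimum distance $\delta_{\mathrm{out}} \geq 1/2 - \epsilon/2$, constructible in polynomial time (for instance via a Justesen-style concatenation approaching the Plotkin bound). Choosing $n_1 = \lceil n/n_0 \rceil$ gives $n_0 n_1 \geq n$ users while the required $k_1 = O(\log n_1/\epsilon)$ keeps the overall length $k = k_0 k_1$ comfortably inside the target $O(n/\log\log n)$. The signature matrix $M$ is then taken to be a Kronecker-type product of $M_0$ and $B$, so that user $(i,j)$'s length-$k_0 k_1$ signature consists of $k_1$ blocks of length $k_0$, with block $r$ equal to $B[r,j]\cdot M_0[:,i]$. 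Decoding splits the received word into its $k_1$ inner blocks and runs the inner decoder on each; a block misdecodes only if more than $\tau_{\mathrm{in}} k_0$ errors fall inside it, so at most $\tau k_1/\tau_{\mathrm{in}}$ outer positions are wrong, which the outer code corrects whenever this quantity stays below $\delta_{\mathrm{out}} k_1/2$. The tolerated fraction is thus $\tau \leq \tau_{\mathrm{in}}\,\delta_{\mathrm{out}}/2$, which tends to $\frac{q-1}{2q}\cdot\frac{1}{4}=\frac{q-1}{8q}$ as $\epsilon \to 0$.

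The main obstacle is that the Kronecker construction does not automatically respect the indicator-summation model built into the definition of a signature code: if two active users share the same inner-code column at an outer position, their contributions in that block sum to an integer multiple of a column rather than an indicator sum of distinct columns, and $M_0$ is not guaranteed to decode that. Closing this gap---either by structuring the outer code so that such collisions cannot occur within a block, or by leveraging the outer decoding pass to resolve the collisions that do arise---is the technical heart of the argument, and is where the loss of a factor of four between the random construction's $\frac{q-1}{2q}$ and the explicit $\frac{q-1}{8q}$ most naturally appears.
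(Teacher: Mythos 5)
The obstacle you flag at the end is not a technical detail to be patched later; it is fatal to your architecture, and the paper's construction is organized precisely to avoid it. By indexing users as pairs (inner column, outer codeword) you get $n_0 n_1 \approx n$ users with total length $k = k_0 k_1 = O\!\left(\frac{\log\log n}{\log\log\log n}\cdot\frac{\log n}{\epsilon^2}\right)$, i.e.\ polylogarithmic in $n$. No signature code with these parameters can exist: already in the noiseless case a counting argument (cf.\ Theorem~\ref{theorem:constanterrorsqequalstwo} and the bound $(n(q-1)+1)^k \geq 2^n$) forces $k = \Omega(n/\log n)$. The reason your scheme evades the bound only formally is the collision problem you describe: when several active users share the same inner column, an inner block receives $M_0 c$ for a nonnegative \emph{integer} vector $c$ with entries possibly exceeding $1$, whereas the inner signature property only distinguishes sums $M_0 \vec u$ with $\vec u \in \{0,1\}^{n_0}$. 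So the inner decoder is simply not applicable, and the loss cannot be recovered by the outer pass.

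The paper sidesteps this by never letting two users share an inner column: the $n$ users are partitioned into $r = n/s$ disjoint groups of $s$ users, each group is handled by one copy of a short signature code $M \in \{0,\dots,q-1\}^{p\times s}$ (with $p,s = \Theta(\sqrt{\log n}/\log\log n)$ resp.\ $\Theta(\sqrt{\log n})$, found and decoded by exhaustive search, which is legitimate because $q^{ps}\le n$), and a \emph{constant-rate} binary outer code $[c_1 r, r, (1/2-\epsilon_2)c_1 r]$ is applied, via the Kronecker product $\bar G^T \otimes M$ over the integers, to the vector of inner answers $\vec w^{(i)} = (M_{(i)}\vec v^{(1)},\dots,M_{(i)}\vec v^{(r)})$. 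Decoding runs in the opposite order to yours: first the outer code recovers each integer-valued $\vec w^{(i)}$ from $\bar G^T\vec w^{(i)} + \vec e^{(i)}$ whenever fewer than a $(1/4-\epsilon_2/2)$ fraction of its coordinates are corrupted (Lemma~2 of \cite{bsh12} --- this is where the factor $\tfrac14$ in $\tfrac{q-1}{8q}=\tfrac{q-1}{2q}\cdot\tfrac14$ actually originates, since a binary code of relative distance $\approx 1/2$ only corrects $\approx 1/4$ errors when the message is integer-valued), and then, since at most $(\tfrac{q-1}{2q}-\epsilon_1)p$ rows $i$ can fail, the inner exhaustive-search decoder recovers each $\vec v^{(j)}$ from answers that are wrong in at most that many of the $p$ positions. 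Because the outer block length is $c_1 r = \Theta(n/s)$ rather than $O(\log n)$, the total length is $k = c_1 r p = \Theta(n/\log\log n)$, consistent with the converse. Your proposal would need to be rebuilt along these lines; as written, the key step is missing and the parameter claims cannot hold.
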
    
    
    \begin{remark}
        Notice that we have presented two explicit constructions throughout this work (Theorem~\ref{theorem:explicitconstructionbinaryconstanterrors} and Theorem~\ref{theorem:linearerrorsqarymexplicitconstruction}). The first one performs better if the amount of errors is small whereas it performs poorly when $t$ is in the order of $k_{lin}$, e.g. for $k_{lin}=\mathcal{O}\left(\frac{n}{\log n}\right)$ it follows that $k=\mathcal{O}(n)$. Conversely, the second construction performs poorly for small $t$ compared to the first construction. However, for $t$ linear in $k$ the codeword length remains in $\mathcal{O}\left(\frac{n}{\log \log n}\right )$.
    \end{remark}
    
    We now restate each theorem and provide their proofs.
    
    \explicitconstructionbinaryconstanterrors*
    
    \begin{proof}
        We first obtain an $k_lin$-by-$n$ query matrix with the explicit construction presented in \cite{lin65}, which also shows that $k_{lin}$ queries are enough to uncover an $n$-bit unknown information vector. We then apply a systematic Reed-Solomon code $[n_{RS},k_{RS},d_{RS}]_q$ on every column $M^{(i)}$. As $M^{(i)} \in \{0,1\}^{k \times 1}$, $k_{RS} = k$. Then $d_{RS} = n_{RS} - k_{RS} + 1$. As $t = \frac{d-1}{2}$, $n_{RS} = k_{RS} + 2t$.
        
        To determine the order of the field $q_{RS}$ for the Reed-Solomon code, we need to observe the answer vector $\vec{w} \in \{0,1,\dots,n\}^{k \times 1}$. This hints that the encoding on columns $M^{(i)}$ must be done over a field of order $q_{RS} > n$, where $q_{RS}$ is a prime number. By the Bertrand–Chebyshev theorem \cite{Tch52}, we know that there exists a prime number $q_{RS}$ between the integers $n$ and $2n$, so $q_{RS} \approx 2n$.
        
        Since the RS-code is defined over a field of order $q_{RS}$ the parity symbols are not necessarily binary. Therefore, we convert the parity check symbols into binary numbers of length $\lceil \log q_{RS} \rceil \approx \log 2n$. In total we extended the construction presented in \cite{lin65} by $2t \log n$ parity rows to account for at most $t$ errors. The correction of up to $t$ errors can be achieved by $\mathcal{RS}$ decoding.
        %At this point we are almost done, but the new query matrix is not a binary matrix due to the added parity rows being in the field of order $q_{RS}$. We convert each element in those parity rows into binary numbers of size $\lceil \log{q_{RS}} \rceil \approx \log{2n}$. So we need a total of $2t\log{n}$ parity rows to account for $t$ errors at the output.
        %As a result, the newly obtained query matrix will result in such an answer sequence $\vec{w'}$ from which the noncorrupted answer vector $\vec{w}$ can be obtained through $\mathcal{RS}$ decoding.
    \end{proof}
    
    \nonexistence*
    
    \begin{proof}
    Define $M^{(i)}$ and $M_{(i)}$ as the $i$th column and $i$th row of the matrix $M$, respectively. Let $e_i, i \in \{1,\dots,n\}$ be the standard basis vectors of $\mathbb{R}^n$. Define $w_{\ell}, \ell \in \{0,1,\dots,q-1\}$ as the number of $\ell$'s in a row vector $M_{(i)}$. Obviously, $\sum_{\ell = 0}^{q-1}w_{\ell} = n$. Similarly define $\gamma_{\ell}, \ell \in \{0,1,\dots,q-1\}$ as the fraction of $\ell$'s in a row vector $M_{(i)}$, so it follows that $w_{\ell} = \gamma_{\ell}n$.
    
    Now, as $M$ tolerates $\tau k$ incorrect answers, for every $1 \leq i_1 < i_2 \leq n$, by Lemma~1 in \cite{bsh12}, we have $wt(M(e_{i_1}-e_{i_2})) = wt(M^{(i_1)} - M^{(i_2 )}) > 2\tau k$. So we write:
    \begin{eqnarray*}
          \binom{n}{2}(2\tau k) &<& \sum_{1 \leq i_1 < i_2 \leq n} wt(M^{(i_1)} - M^{(i_2 )}) \\
          &=& \frac{1}{2}\sum_{\ell = 0}^{q-1} w_\ell(n-w_\ell) \\
          &=& \frac{1}{2} \left(n^2 - \sum_{\ell = 0}^{q-1}w_\ell^2 \right) \\
          &=& \frac{n^2}{2} \left(1 - \sum_{\ell = 0}^{q-1}\gamma_\ell^2 \right) \\
          &\leq& \frac{n^2}{2} \left(1 - q\frac{1}{q^2} \right).
    \end{eqnarray*}
    The transition from the first line to the second is because if we have a row with $w_{\ell}$ ones and $n-w_{\ell}$ zeros, the contribution of that row to the sum is $w_\ell(n-w_\ell)$, as the difference between the digit in the $i_1$th position and the digit in the $i_2$th position of that row will be nonzero only when they are different. Considering a value $0 \leq \ell \leq q-1$, there are $w_{\ell}$ many ${\ell}$'s in that row, causing $w_\ell(n-w_\ell)$ many cases where the difference is nonzero. And the factor $1/2$ is due to the sum on the right-hand side in the first line being an ordered sum.
    
    We conclude with
    \begin{eqnarray*}
        \tau n (n-1) & < & \frac{n^2}{2} \frac{q-1}{q} \\
        \tau         & < & \frac{q-1}{2q} + \mathcal{O}(1)
    \end{eqnarray*}
    and the proof is complete.
    \end{proof}
    
    \bshoutytheoremtwokqarym*
    
    \begin{proof}
    Let $M \in \{0,1, \dots, q-1\}^{k \times n}$ be the query matrix. Consider two vectors $\vec{x},\vec{y} \in \{0,1\}^{n}$ and define $\vec{z} = \vec{x}-\vec{y} \in \{-1,0,1\}^{n}$, where the number of nonzero entries in $\vec{z}$ are $w_z = w_z^+ + w_z^-$, $w_z^+$ denoting the number of entries equal to $1$ and $w_z^-$ denoting the number of entries equal to $-1$ in $\vec{z}$. We will show that Pr$[(\exists \vec{z} \in \{-1,0,1\}^{n}) {\ } wt(M \vec{z}^T) \leq 2t] < 1$ which implies the result.
    
    Define $r_{w_z^+,w_z^-} = \text{Pr}[M_{(i)} z = 0]$, where $M_{(i)}$ is the $i$th row of $M$. Notice that $r_{w_z^+,w_z^-}$ is maximized for $w_z^+ = w_z^- = w_z/2$, as has been shown in Corollary ~\ref{corollary:fornovertwo}.
    From Lemma ~\ref{lemma:upperboundforqlargerthan3}:
    \begin{eqnarray*}
        r_{w_z^+,w_z^-} = \Pr[M_{(i)} \cdot \vec{z} = 0] &\leq& \frac{\frac{q^{w_z}q}{\sqrt{w_z}}\cdot c_q}{q^{w_z}} \\
        & = & \frac{q}{\sqrt{w_z}}\cdot c_q
    \end{eqnarray*}
    because as has been shown in Lemma ~\ref{corollary:trianglesumsquarelaw}, the sum of the squares of the elements on row $w_z/2$ in $q$-ary Pascal's Triangle is equal to the central coefficient on row $w_z$. We also know that $r_{w_z^+,w_z^-} \leq 1/q$ for $w_z \geq 1$ due to Lemma ~\ref{lemma:upperboundontheprobabilityofthecentralcoefficient}.
    
    Now we write:
    \begin{equation*}
        \text{Pr}[wt(M \vec{z}^T) \leq 2t] =
        \displaystyle \sum_{j = 0}^{2t} \left(1-r_{w_z^+,w_z^-}\right)^{j} r_{w_z^+,w_z^-}^{k-j} \binom{k}{j}
    \end{equation*}
    where $j$ denotes the number of nonzero positions in $M\vec{z}^T$.
    
    Making use of $r_{w_z^+,w_z^-} \leq 1/q$:
    \begin{eqnarray*}
        \left(1-r_{w_z^+,w_z^-}\right)^{j} r_{w_z^+,w_z^-}^{k-j} 
        &\leq& \left(1-r_{w_z^+,w_z^-}\right)^{2t} r_{w_z^+,w_z^-}^{k-2t} \\
        & = & \frac{(q-1)^{2t}}{q^k}.
    \end{eqnarray*}
    Continuing:
    \begin{eqnarray*}
        \Pr[wt(M\vec{z}^T) \leq 2t] &\leq& 
        \displaystyle \sum_{j = 0}^{2t} \frac{(q-1)^{2t}}{q^k}  \binom{k}{j} \\
        & \leq & \frac{(q-1)^{2t}}{q^k} \cdot 2^{H_2(2\tau)k}.
    \end{eqnarray*}
    Also, as $\left(1-r_{w_z^+,w_z^-}\right)^{2t-j} r_{w_z^+,w_z^-}^{k-2t+j} \leq r_{w_z^+,w_z^-}^{k-2t}$ we can state that:
    \begin{eqnarray*}
        \Pr[wt(M \vec{z}^T) \leq 2t] &\leq& r_{w_z^+,w_z^-}^{k-2t} \displaystyle \sum_{j = 0}^{2t} \binom{k}{j} \\
        & \leq & r_{w_z^+,w_z^-}^{(1-2\tau)k}\cdot2^{H_2(2\tau)k} \\
        & \leq & \left(\frac{q^2 c_q^2}{w_z}\right)^{\frac{1-2\tau}{2}k}\cdot2^{H_2(2\tau)k}.
    \end{eqnarray*}
    We can now write:
    \begin{multline*}
        \Pr[(\exists z) wt(M \vec{z}^T) \leq 2t] \\ 
        \leq \displaystyle \sum_{w_z = 1}^{n} \binom{n}{w_z} 2^{w_z} \cdot 2^{H_2(2\tau)k} \\
        \cdot \text{min} \left\{\frac{(q-1)^{2t}}{q^k}, {\ } \left(\frac{q^2 c_q^2}{w_z}\right)^{\frac{1-2\tau}{2}k} \right\} .
    \end{multline*}
    Here, note that we used the upper bound $\sum_{j = 0}^{2t} \binom{k}{j} \leq 2^{H_2(2\tau)k}$. This upepr bound applies when we consider a linear number of errors, i.e. $t = \tau k$. For a constant $t$ number of errors, the upper bound $\sum_{j = 0}^{2t} \binom{k}{j} \leq \binom{k}{2t}\cdot (2t+1)$ is used instead.
    
    Moving on,
    \begin{eqnarray*}
         && \binom{n}{w_z} 2^{w_z} \cdot \frac{(q-1)^{2t}}{q^k} \cdot 2^{H_2(2\tau)k} \\
         & = & 2^{w_z + w_z\log{n} -k\log{q} + 2t\log{(q-1)} +kH_2(2\tau)} \\
         & = & 2^{w_z(1 + \log{n}) - k\log{(q)}\left(1 - \frac{q-1}{q}\log_q{(q-1)} - \frac{H_2(2\tau)}{\log{q}}\right)} \\
         & < & \frac{1}{n}
    \end{eqnarray*}
    holds for $w_z < \frac{n}{\log^3{n}}$ as $k \approx \frac{n}{\log{n}}$, since
    \begin{eqnarray*}
    && \frac{q-1}{q}\log_q{(q-1)} + \frac{H_2(2\tau)}{\log{q}} \\ 
    & = & \frac{q-1}{q}\log_q{(q-1)} - 2\tau \log_q{2\tau} \\
    &&- (1-2\tau)\log_q{(1-2\tau)} \\
    &<& \frac{q-1}{q}\log_q{(q-1)} - \frac{q-1}{q} \log_q{\left(\frac{q-1}{q}\right)} \\ 
    && - \frac{1}{q}\log_q{\left(\frac{1}{q}\right)} = 1
    \end{eqnarray*}
    holds with inequality due to $\tau < \frac{q-1}{2q}$.
    
    For $w_z \geq \frac{n}{\log^3{n}}$:
    \begin{eqnarray}
    \label{eq:replacelinearwithconstant}
          &&\binom{n}{w_z} 2^{w_z} 2^{H_2(2\tau)k} \left(\frac{q^2 c_q^2}{w_z}\right)^{\frac{1-2\tau}{2}k} \\ \nonumber
          &\leq& 3^n 2^{H_2(2\tau)k} \left(\frac{q^{2}c_q^2\log^3{n}}{n}\right)^{\frac{1-2\tau}{2}k} \\ \nonumber
          &\leq& 2^{n\log{3}-\frac{1-2\tau}{2}k\log{n} + H_2(2\tau)k + \frac{1-2\tau}{2}k\log{\left(q^{2}c_q^2\log^3{n}\right)}} \\ \nonumber
          & < & \frac{1}{n}
    \end{eqnarray}
    must hold. Then
    \begin{multline*}
        n\log{3}-\frac{1-2\tau}{2}k\left(\log{n} - \log{\left(q^{2}c_q^2\log^3{n}\right)}\right) \\ 
        + kH_2(2\tau) < 0.
    \end{multline*}
    Notice as $k$ is on the order of $\frac{n}{\log{n}}$, the significant summands are only the first two. So we obtain $k$ as: 
    \begin{equation*}
        k > \frac{2n\log{3}+\mathcal{O}\left(\frac{n}{\log{n}}\right)}{(1-2\tau)\left(\log{n} - \log{\left(q^{2}c_q^2\log^3{n}\right)}\right)}.
    \end{equation*}
    Setting in $c_q = \frac{1}{2}\left(\frac{2}{\pi}\right)^{(q-1)/2}\epsilon$ we obtain:
    \begin{multline*}
        k > \frac{2n\log{3}}{(1-2\tau)\left(\log{n} + (q-1)\log{\frac{\pi}{2}}\right)} \\  +\mathcal{O}\left(\frac{n}{\log{n}(q+\log{n})}\right).
    \end{multline*}
    
    This is the result for linear number of errors, i.e. for $t = \tau k$. We now write $\binom{k}{2t}\cdot (2t+1)$ instead of $2^{H_2(2\tau)k}$ in Eq.~\eqref{eq:replacelinearwithconstant} to obtain the result for constant number of errors. Performing the replacement we get:
    
    \begin{eqnarray*}
          &&\binom{n}{w_z} 2^{w_z} \binom{k}{2t}\cdot (2t+1) \left(\frac{q^2 c_q^2}{w_z}\right)^{\frac{k-2t}{2}} \\
          &\leq& 3^n \sqrt{\frac{k}{2\pi2t(k-2t)}} \\ 
          && \cdot2^{kH_2(\frac{2t}{k})}\cdot (2t+1) \cdot \left(\frac{q^{2}c_q^2\log^3{n}}{n}\right)^{\frac{k-2t}{2}} \\
          &\leq& 2^{n\log{3}-\frac{k-2t}{2}\log{n} + \log{(2t+1)} + kH_2(\frac{2t}{k})} \\
          && \cdot 2^{ + \frac{k-2t}{2}\log{\left(q^{2}c_q^2\log^3{n}\right)}+\frac{1}{2}\log{\left(\frac{k}{2\pi2t(k-2t)}\right)}} <  \frac{1}{n}
    \end{eqnarray*}
    must hold. So,
    \begin{multline*}
          n\log{3}-\frac{k-2t}{2}\left(\log{n} - \log{\left(q^{2}c_q^2\log^3{n}\right)} \right) + \\ \log{(2t+1)} + kH_2(\frac{2t}{k}) +\frac{1}{2}\log{\left(\frac{k}{2\pi2t(k-2t)}\right)} < 0.
    \end{multline*} 
    Notice as $k$ is on the order of $\frac{n}{\log{n}}$ and $t \leq k/2$, the significant summands are only the first two. Then we obtain $k$ as:
    \begin{equation*}
        k > \frac{2n\log{3} + \mathcal{O}\left(\frac{n}{\log{n}}\right)}{\log{n} - \log{\left(q^{2}c_q^2\log^3{n}\right)}} + 2t+1.
    \end{equation*}
    Setting in $c_q = \frac{1}{2}\left(\frac{2}{\pi}\right)^{(q-1)/2}\epsilon$ we obtain:
    \begin{multline*}
        k > \frac{2n\log{3}}{\log{n} + (q-1)\log{\frac{\pi}{2}}} + 2t+1 \\ +\mathcal{O}\left(\frac{n}{\log{n}(q+\log{n})}\right).
    \end{multline*}
    and the proof is complete.
    \end{proof}
    
    \linearerrorsqarymexplicitconstruction*
    
    \begin{proof}
    Assume there exists two small constants $\epsilon_1$ and $\epsilon_2$ so that $(\frac{q-1}{2q} - \epsilon_1)(1/4 - \epsilon_2/2) = (\frac{q-1}{8q} - \epsilon)$. Due to Theorem~\ref{theorem:bshoutytheoremtwokqarym}, there exists a query matrix $M \in \{0,1,\dots,q-1\}^{p \times s}$ that tolerates $(\frac{q-1}{2q} - \epsilon_1)p$ errors. Now say that:
    \begin{equation*}
        p = \frac{(8\log{3})\sqrt{\log{n}}}{(\log{\log{n}})\sqrt{\log{q}}}, \:\:\:\:\: s = \frac{\sqrt{\log{n}}}{\sqrt{\log{q}}}.
    \end{equation*}
    There are at most $q^{ps} = n^{\frac{(8\log{3})}{\log{\log{n}}}} \leq n$ such matrices, and by Lemma~1 in \cite{bsh12}, a query matrix $M$ can be found in polynomial time, such that $\vec{x}$ can be recovered from $M\vec{x}^T+\vec{e}$ where $\vec{x} \in \{0,1\}^{1 \times s}$ and $wt(\vec{e}) \leq (\frac{q-1}{2q} - \epsilon_1)p$, because $\vec{x}$ and $\vec{e}$ can be found through exhaustive search. We denote this algorithm as $\mathcal{ES}$. 
     
    Now, define $r = n/s$ and define $\mathcal{C}$ as a linear code $[N, K, D] := [c_1r, r, (1/2 - \epsilon_2)c_1r]$ over $\mathbb{Z}_0^{(2)}$ with $G \in \{0,1\}^{r\times(c_1r)}$ and polynomial time decoding algorithm $\mathcal{D}$ where $c_1$ is a constant with $c_1 \geq \frac{r-1}{r}\cdot \frac{1}{\frac{1}{2}+\epsilon_2} \approx \frac{2}{1 + \epsilon_2}$ due to Singleton bound \cite{sin64}. An example to such a code are concatenation codes. Let 
    \begin{equation*}
        B \triangleq \bar{G}^T \otimes M = 
        \begin{pmatrix}
        g_{1,1}M & g_{1,2}M & \dots  & g_{1,r}M \\
        g_{2,1}M & g_{2,2}M & \dots  & g_{2,r}M \\
        \vdots   & \vdots   & \ddots & \vdots   \\
        g_{c_1r,1}M & g_{c_1r,2}M & \dots  & g_{c_1r,r}M
        \end{pmatrix}
    \end{equation*}
    where $\bar{G}$ is essentially $G$ but over $\mathbb{Z}$. Note that $B$ is a $k$-by-$n$, $n = rs$ matrix where 
    \begin{equation*}
        k = \frac{(8c_1\log{3})n}{\log{\log{n}}}.
    \end{equation*}
     
    Now we argue that $B$ tolerates $k' = (\frac{q-1}{2q} - \epsilon_1)(1/4 - \epsilon_2/2)k = (\frac{q-1}{8q} - \epsilon)k$ incorrect answers and given $ \vec{b} = B\vec{v} + \vec{e}$, where $\vec{v} \in \{0,1\}^n$, $\vec{e} \in \mathbb{Z}^k$ and $wt(\vec{e}) \leq k'$, $\vec{v}$ can be reconstructed in polynomial time.
     
    We start by dividing $\vec{v}$ into size $s$-vectors where $\vec{v}^{(\ell)} \in \{0,1\}^s$ and then define $\vec{w}^{(i)}$ as:
    \begin{equation*}
        \vec{v} = 
        \begin{pmatrix}
        \vec{v}^{(1)} \\
        \hline
        \vec{v}^{(2)} \\
        \hline
        \vdots \\
        \hline
        \vec{v}^{(r)}
        \end{pmatrix}, \:\:\:\:\:
        \vec{w}^{(i)} = 
        \begin{pmatrix}
        M_{(i)} \cdot \vec{v}^{(1)} \\
        \hline
        M_{(i)} \cdot \vec{v}^{(2)} \\
        \hline
        \vdots \\
        \hline
        M_{(i)} \cdot \vec{v}^{(r)}
        \end{pmatrix}
    \end{equation*}
    
    where $M_{(i)}$ is the $i$th row of $M$, $i \in \{1,2,\dots,p\}$. Then for 
    $\vec{b}^{(i)} = (b_{i}, b_{p+i}, b_{2p+i}, \dots, b_{(c_1r-1)p+i})^T$ and
    $\vec{e}^{(i)} = (e_{i}, e_{p+i}, e_{2p+i}, \dots, e_{(c_1r-1)p+i})^T$ we have $\vec{b}^{(i)} = \bar{G}^T\vec{w}^{(i)} + \vec{e}^{(i)}$.

    As $\vec{w}^{(i)} \in \{0,1,\dots,s(q-1)\}^r$, by Lemma~2 in \cite{bsh12}, there is a polynomial time algorithm so that if $wt(\vec{e}^{(i)}) < (1/4 - \epsilon_2/2)c_1r$, then the algorithm returns $\vec{w}^{(i)}$. Otherwise, the algorithm returns some $\vec{w}'^{(i)} \in \{0,1,\dots, 2s(q-1)\}^r$. Let $\mathcal{J} := \{j_1,j_2,\dots,j_m\} \subseteq \{1,2,\dots,p\}$ such that  $wt(\vec{e}^{(i)}) > (1/4 - \epsilon_2/2)c_1r$ for $i \in \mathcal{J}$ and $wt(\vec{e}^{(i)}) \leq (1/4 - \epsilon_2/2)c_1r$ for $i \not \in \mathcal{J}$. As 
    \begin{equation*}
        \sum_{i = 1}^p wt(\vec{e}^{(i)}) = wt(\vec{e}) \leq k'
    \end{equation*}
    it holds that
    \begin{equation*}
        m \leq \frac{k'}{(1/4 - \epsilon_2/2)c_1r} = \left(\frac{q-1}{2q}-\epsilon_1\right)p.
    \end{equation*}
    We run the algorithm for each $\vec{b}^{(i)}$ and obtain some vector $\vec{z}^{(i)}$. Due to Lemma~2 in \cite{bsh12}, $\vec{z}^{(i)}$ is some vector  $\vec{w}'^{(i)} \in \{0,1,\dots, 2s(q-1)\}^r$ if $i \in \mathcal{J}$ and $\vec{z}^{(i)} = \vec{w}^{(i)}$ if $i \not \in \mathcal{J}$. Let $\vec{a}^{(j)} = (z_j^{(1)}, \dots, z_j^{(p)})^T$ and $\vec{g}^{(j)} = (w_j^{(1)} - z_j^{(1)}, \dots, w_j^{(p)} - z_j^{(p)})^T$ for $j = \{1,2, \dots,r\}$. Since $\vec{a}^{(j)} + \vec{g}^{(j)} = M\vec{v}^{(j)}$ we have $\vec{a}^{(j)} = M\vec{v}^{(j)} - \vec{g}^{(j)}$. Since for every $j$ we have $wt(\vec{g}^{(j)}) \leq  m \leq ((q-1)/2q - \epsilon_1)p$ the vectors $\vec{v}^{(j)}$, $j = \{1,2,\dots,r\}$ can be found in polynomial time by the algorithm $\mathcal{ES}$.
    \end{proof}
    
	\section{Conclusion and Further Research Directions}\label{section:conclusion}
    In this work we covered the problem of constructing signature codes for a noisy adder MAC in the adversarial setting.%detecting the active users in an $n$ user noisy adder MAC, where the active users send their signature $q$-ary codewords $X_i \in \{0,1,\dots,q-1\}^{1\times k}$, $i \in \{1,2,\dots,n\}$ and the inactive users send the all-zero codeword through the channel.
    We have shown the existence of a set of signature codewords for the adder MAC that allow perfect detection of the active users even in the event of errors at the channel output. Specifically the paper covers the case of a fixed number of errors $t$, independent of the amount of the signature codeword length $k$ and the number of codewords $n$ as well as the case that the upper bound on the number of errors is linear in $k$, i.e. $t=\tau k$.
    
    Apart from the existence of signature codes we furthermore presented two explicit constructions that can be run in polynomial time with respect to $k,n$ and $t$. We are introducing two constructions because the first one performs better for small values of $t$ while the latter one performs better if for large $t$ (e.g. if $t$ is linear in $k$).
    
    Furthermore, we show a converse result, namely that it is impossible to construct signature codes for the $q$-ary adder MAC if the number of erroneous symbols at the channel output exceeds $\frac{q-1}{2q}$.
    
    Our constructions and existence results do not match the converse bounds we present throughout this work. We conjecture that especially the first explicit construction can be improved since in the converse bound (Theorem~\ref{theorem:constanterrorsqequalstwo}) is only increased by $1$ for each error while the explicit construction is increased by $\log n$ per additional error.
    
    An additional scenario for future work would be to consider the case that the information vector $\vec{u}$ is \emph{nonbinary}, i.e. $\vec{u}\in \{0,1,\ldots,r-1\}^{1\times n}$ for $r>2, r\in \mathbb{Z}^+$. This case has been considered by Lindström in \cite{lin65} for the error-free case. In the noisy adder MAC problem, this would translate to each user having not one, but $r-1$ messages excluding the all-zero vector. In other words, each user in the channel is given a choice $r_i \in  \{0,1,\dots,r-1\}$, with which their signature codeword $\vec{s}_i \in \{0,1,\dots,q-1\}^{k \times 1}$ will be multiplied so that $\sum_{i = 1}^{n}{r_{i}\vec{s}_i} = Y$, $Y \in \{0,1,\dots,n(q-1)(r-1)\}^{k \times 1}$. Then the goal would be to determine the vector $(r_1,r_2,\dots,r_n)$ from $Y$ in the presence of $t$ errors.

	\bibliographystyle{ieeetransa}
	\bibliography{literature}

% Generated by IEEEtranSA.bst, version: 1.12 (2007/01/11)
\providecommand{\etalchar}[1]{$^{#1}$}
\begin{thebibliography}{GKLM17}
\providecommand{\url}[1]{#1}
\csname url@samestyle\endcsname
\providecommand{\newblock}{\relax}
\providecommand{\bibinfo}[2]{#2}
\providecommand{\BIBentrySTDinterwordspacing}{\spaceskip=0pt\relax}
\providecommand{\BIBentryALTinterwordstretchfactor}{4}
\providecommand{\BIBentryALTinterwordspacing}{\spaceskip=\fontdimen2\font plus
\BIBentryALTinterwordstretchfactor\fontdimen3\font minus
  \fontdimen4\font\relax}
\providecommand{\BIBforeignlanguage}[2]{{%
\expandafter\ifx\csname l@#1\endcsname\relax
\typeout{** WARNING: IEEEtranSA.bst: No hyphenation pattern has been}%
\typeout{** loaded for the language `#1'. Using the pattern for}%
\typeout{** the default language instead.}%
\else
\language=\csname l@#1\endcsname
\fi
#2}}
\providecommand{\BIBdecl}{\relax}
\BIBdecl

\bibitem[AE04]{andras2004metric}
S.~Andr{\'a}s and T.~Eric, ``On metric generators of graphs,'' \emph{Math.
  Oper. Res}, vol.~29, no.~2, pp. 383--393, 2004.

\bibitem[Bon93]{bondarenko1993}
\BIBentryALTinterwordspacing
B.~Bondarenko, \emph{Generalized Pascal Triangles and Pyramids: Their Fractals,
  Graphs, and Applications}.\hskip 1em plus 0.5em minus 0.4em\relax Fibonacci
  Assoc., 1993. [Online]. Available:
  \url{https://books.google.de/books?id=wyT0mwEACAAJ}
\BIBentrySTDinterwordspacing

\bibitem[Bsh12]{bsh12}
N.~Bshouty, ``On the coin weighing problem with the presence of noise,'' in
  \emph{APPROX-RANDOM}, 2012.

\bibitem[Chv83]{Chv83}
V.~Chv{\'a}tal, ``Mastermind,'' \emph{Combinatorica}, vol.~3, pp. 325--329,
  1983.

\bibitem[CM66]{can66}
D.~G. Cantor and W.~H. Mills, ``Determination of a subset from certain
  combinatorial properties,'' \emph{Canadian Journal of Mathematics}, vol.~18,
  p. 42–48, 1966.

\bibitem[CW79]{chang1979coding}
S.-C. Chang and E.~Weldon, ``Coding for t-user multiple-access channels,''
  \emph{IEEE Transactions on Information Theory}, vol.~25, no.~6, pp. 684--691,
  1979.

\bibitem[ER63]{Erd63}
P.~Erdős and A.~Rényi, ``On two problems of information theory,'' in
  \emph{Publ. Hung. Acad. Sci.}, vol.~8, 1963, pp. 241--254.

\bibitem[Fre56]{Freund56}
\BIBentryALTinterwordspacing
J.~E. Freund, ``Restricted occupancy theory-a generalization of pascal's
  triangle,'' \emph{The American Mathematical Monthly}, vol.~63, no.~1, pp.
  20--27, 1956. [Online]. Available: \url{http://www.jstor.org/stable/2308048}
\BIBentrySTDinterwordspacing

\bibitem[Gal68]{gal68}
R.~G. Gallager, \emph{Information Theory and Reliable Communication}.\hskip 1em
  plus 0.5em minus 0.4em\relax USA: John Wiley and Sons, Inc., 1968.

\bibitem[GKLM17]{grit2017}
\BIBentryALTinterwordspacing
V.~Gritsenko, G.~Kabatiansky, V.~Lebedev, and A.~Maevskiy, ``Signature codes
  for noisy multiple access adder channel,'' \emph{Des. Codes Cryptography},
  vol.~82, no. 1–2, p. 293–299, jan 2017. [Online]. Available:
  \url{https://doi.org/10.1007/s10623-016-0228-1}
\BIBentrySTDinterwordspacing

\bibitem[Hoe63]{hoe63}
\BIBentryALTinterwordspacing
W.~Hoeffding, ``Probability inequalities for sums of bounded random
  variables,'' \emph{Journal of the American Statistical Association}, vol.~58,
  no. 301, pp. 13--30, 1963. [Online]. Available:
  \url{http://www.jstor.org/stable/2282952}
\BIBentrySTDinterwordspacing

\bibitem[JP19]{nik19}
\BIBentryALTinterwordspacing
Z.~Jiang and N.~Polyanskii, ``On the metric dimension of cartesian powers of a
  graph,'' \emph{Journal of Combinatorial Theory, Series A}, vol. 165, p.
  1–14, Jul 2019. [Online]. Available:
  \url{http://dx.doi.org/10.1016/j.jcta.2019.01.002}
\BIBentrySTDinterwordspacing

\bibitem[KL18]{kab18}
G.~Kabatiansky and V.~Lebedev, ``On metric dimension of nonbinary hamming
  spaces,'' \emph{Problems of Information Transmission}, vol.~54, pp. 48--55,
  01 2018.

\bibitem[Lin65]{lin65}
B.~Lindström, ``On a combinatorial problem in number theory,'' \emph{Canadian
  Mathematical Bulletin}, vol.~8, no.~4, p. 477–490, 1965.

\bibitem[Sha60]{sha60}
\BIBentryALTinterwordspacing
H.~Shapiro, ``Problem \uppercase{E} 1399 in elementary problems and
  solutions,'' \emph{The American Mathematical Monthly}, vol.~67, no.~1, pp.
  81--86, 1960. [Online]. Available:
  \url{https://doi.org/10.1080/00029890.1960.11989451}
\BIBentrySTDinterwordspacing

\bibitem[Sin06]{sin64}
\BIBentryALTinterwordspacing
R.~Singleton, ``Maximum distance q-nary codes,'' \emph{IEEE Trans. Inf.
  Theor.}, vol.~10, no.~2, p. 116–118, sep 2006. [Online]. Available:
  \url{https://doi.org/10.1109/TIT.1964.1053661}
\BIBentrySTDinterwordspacing

\bibitem[Tch52]{Tch52}
\BIBentryALTinterwordspacing
Tchebichef, ``\BIBforeignlanguage{fre}{Mémoire sur les nombres premiers.}''
  \emph{\BIBforeignlanguage{fre}{Journal de Mathématiques Pures et
  Appliquées}}, pp. 366--390, 1852. [Online]. Available:
  \url{http://eudml.org/doc/234762}
\BIBentrySTDinterwordspacing

\bibitem[vdMV04]{Meu04}
E.~van~der Meulen and S.~Vangheluwe, ``Construction of new signature codes for
  the t-terminal binary adder multiple access channel with partially active
  users,'' in \emph{International Symposium onInformation Theory, 2004. ISIT
  2004. Proceedings.}, 2004, pp. 107--.

\end{thebibliography}
\end{document}